\renewcommand{\doi}[1]{}
\begin{document}

\title*{``Fields'' in classical and quantum field theories}
\author{Alberto Ibort\orcidID{0000-0002-0580-5858}, Arnau Mas\orcidID{0000-0003-0532-0938} and\\ Luca Schiavone\orcidID{0000-0002-1817-5752}}
\institute{Alberto Ibort, Arnau Mas \at Department of Mathematics, Univ. Carlos III de Madrid, Avda. de la Universidad 30, 28911, Legan'es, Madrid, Spain, \email{albertoi@math.uc3m.es, arnau.mas@icmat.es} and ICMAT, Calle Nicolas Cabrera, 13-15. Campus de Cantoblanco, 28049 Madrid, Spain,
\and Luca Schiavone \at Dipartimento di Matematica e Applicazioni, Universit\`a Federico II di Napoli, Via Cintia, Monte S. Angelo I-80126, Napoli, Italy \email{luca.schiavone@unina.it}}
%
%
\maketitle
\begin{abstract}  
\newline The challenges posed by the development of field theories, both classical and quantum, force us to question their most basic and foundational ideas like the role and origin of space-time, the meaning of physical states, etc.  Among them the notion of ``field'' itself is notoriously difficult to address.   These notes aim to analyze such notion from the perspective offered by the groupoid description of quantum mechanics inspired by Schwinger's picture of quantum mechanics.  Then, a natural interpretation of the notion of physical fields as functors among appropriate groupoids will emerge. The domain of a field in this new picture is a groupoid that describes ``test particles'', and its codomain is a groupoid that describes the intrinsic nature of the system being probed.    Such a space of functors carries some natural structures, which are best described in a categorical language.   Some illustrative examples will be presented that could help clarify the various abstract notions discussed in the text. 
 \end{abstract}
 

\keywords{Field theories; Quantum Mechanics; Groupoids; Categories; Functors; Sheaves; Topos.}


\section{Introduction}\label{sec:introduction}

In these notes a major conceptual revision of the notion of physical fields is proposed.   This new interpretation is inspired on the groupoid picture of quantum mechanics that has been recently proposed (see, for instance, \cite{Ci19a,Ci19b,Ci19c,Ci19d,Ci24} and references therein).   The fundamental idea supporting this new formulation is that the description of a physical system is often done by probing it using simpler systems, called ``test particles'' in the jargon of field theories.   This idea is analised in the conceptual framework provided by the aforementioned groupoid picture of quantum mechanics and from there the main notions involved in the theory of fields will be derived, among them the notion of field itself that must be modified accordingly, becoming functors, or groupoid homomorphisms, among groupoids rather than functions among sets.

Many physical theories use the notion of ``field'' as a main tool to take into account the observations that take place in a background spacetime.     In simple classical situations such fields are often defined in terms of more fundamental physical quantities that are assumed to be well determined.   For instance, the field $\mathbf{v} = \mathbf{v}(x,y,z,t) \in \mathbb{R}^3$ of velocities used to describe a fluid is well determined from the physical characteristics of the fluid itself.    More challenging is to ascertain the nature of classical fields associated to fundamental interactions like the electric or magnetic fields, $\mathbf{E} = \mathbf{E}(x,y,z,t)$, $\mathbf{B} = \mathbf{B}(x,y,z,t)$, created by a distribution of charges or, even harder, the electromagnetic potential $A = (A_0, \mathbf{A})$ which is used as the fundamental physical entity in quantum electrodynamics.  The same can be said of Newtonian's gravitational potential field $\phi = \phi (x,y,z,t)$, or Einstein's metric field $g = g_{\mu\nu} (x) dx^\mu dx^\nu$.    In the later cases, the physical and mathematical characteristics of the given field are tested or probed by using ``test particles'', that is, idealized systems that interact with the field without perturbing it.    Thus we describe physical fields in an indirect way by probing them with auxiliary systems whose characteristics are used to construct a mathematical picture of the theory.

\textit{Classical fields and the mathematical description of classical fields.}  In most cases, the experimental devices required to observe the test particles used in the description of the field itself, require a spacetime framework, that is, the events describing the behaviour of the test particles take place in a given spacetime domain (typically provided by the given laboratory framework).   Thus, the mathematical structure that emerges in the classical treatment of field theories is that physical fields are modelled as functions $\phi$ defined on a given spacetime $\mathscr{M}$ with values in a set carrying additional structures that captures the observations performed on the given test particles.  Such structures are often linear structures, or Lie algebras of Lie groups, other manifolds, etc.    More generally, physical fields are modelled mathematically as cross sections $\phi$ of a given bundle $\pi \colon \Omega \to \mathscr{M}$:
\begin{equation}\label{eq:field_sections}
\phi \colon \mathscr{M} \to \Omega \, , \qquad \pi \circ \phi = \mathrm{id\,}_{\mathscr{M}} \, ,
\end{equation}
thus, locally, fields look just as functions on open subsets $U \subset \mathscr{M}$, with values in a given set $V$ (the standard fibre of the bundle).    In addition to all this physical fields are subjected to transform properly with respect to the symmetries of the theory:  Kinematical symmetries like Poincar\'e symmetry in relativistic field theories or gauge transformations when describing fundamental interactions.     This mathematical framework to describe theories of fields is widely acknowledged today, however the exact physical nature of the bundle $\pi \colon \Omega \to \mathscr{M}$ is not completely clear in spite of its usefulness. 

The geometrical structures which are present in the spaces of fields used to describe classical theories, for instance in the space of jets of the bundle $\pi \colon \Omega \to \mathscr{M}$, or in the  bundle describing the covariant Hamiltonian picture of the theory, become of paramount importance and a lot of efforts have been poured in trying to ascertain their main conceptual ingredients.   It is impossible in this brief introduction to give due credit to the many ideas and historical developments that have led to the well established geometrical pictures of field theories used today.  We will just point out that the Lagrangian description of field theories has reached a mature state whose geometry is neatly encoded in the so called variational bicomplex (see, for instance, \cite{De99} and references therein).   More recently a categorical description of Lagrangian field theories has been gaining traction (see, for instance \cite{De99,Bl24}).  A comprehensive treatment of Lagrangian field theories using the categorical notion of the topos of smooth sets is presented in \cite{Gi23}.  Alternatively, a covariant Hamiltonian picture has been firmly established under the name of the multisymplectic picture of field theories (see, for instance \cite{Ca91,Go04,Ib17} and references therein) and its implications still being developed.    

\textit{Quantum fields.} Very early after the dawn of quantum mechanics the problem of describing relativistic quantum systems brought the introduction of ``quantum'' versions of the dynamical equations of classical field theories.   The first attempts gave rise to relativistic wave equations, like Dirac's equation, and the fields of the theory were  again sections of bundles over spacetime, in the case of Dirac's equation, sections of a spinor bundle over Minkowski spacetime.    Quite soon it was realised that a consistent description of the quantum structure of the observed systems implied a quantization of the given ``classical fields'', a process that was known as ``second quantization''.  Classical fields were turned to field operators: $\phi \mapsto \Phi$, where $\Phi$ now represents a distribution on Minkowski space with values on bounded operators on a given Hilbert space  $\mathcal{H}$ (supporting a representation of Poincar\'e's group).  The formalization of all these ideas led to the axiomatic description of quantum field theories provided by the Streater-Wightman axioms \cite{St00} and, after that other axiomatic formalizations like Osterwalder-Schrader axiomatic Euclidean field theories (see, for instance, \cite{Gl87}).

\textit{Feynman's picture:  Lagrangian relativistic field theories.}  In spite of the formal developments around the axiomatic mathematical foundations of quantum field theories discussed in the previous paragraph, R. Feynman's path integral formulation of Lagrangian field theories opened a complete new way of thinking about quantum field theories.   In Feynman's picture fields are still classical fields but Feynman's dynamical principle determines the physical amplitudes of the theory as the path integral on the space of classical fields of the exponential of the classical action of the theory.  In spite of its intrinsic technical difficulties, the success of the perturbative treatment of electrodynamics justified such approach and it has become the main tool in the study and analysis of fundamental interactions ever since (see, for instance, \cite{Ra89}).

\textit{Schwinger's picture of quantum mechanics and field theories.}  In an extraordianry tour de force J. Schwinger established a Lagrangian picture of quantum field theories whose fields were operators instead of classical functions and the dynamical principle of the theory was given in terms of a quantum formulation of Hamilton's variational principle that, in a classical setting, is known as the Schwinger-Weiss variational principle \cite{We36,Sc91}.   In short, it establishes that the variation of the physical amplitudes are given by the expected value of the variation of an operator action functional and depends solely on the action of operators at boundary states.   In the particular instance of quantum mechanics, if $\langle b, t_1 \mid a, t_0\rangle$ denotes the amplitude of reaching the quantum state determined by $b$ at time $t_1$ when the system was at the state determined by $a$ at time $t_0$, then:
$$
\delta \langle b, t_1 \mid a, t_0\rangle = i \langle b, t_1 \mid  \delta \int_{t_0}^{t_1} \mathbf{L}(s) ds \mid a, t_0\rangle = \langle b, t_1 \mid G_2 - G_1\mid a, t_0\rangle \, ,
$$
where $\mathbf{L} (s)$ is an operator Lagrangian functional, and $G_1,G_2$ are Hermitean operators acting on the given Hilbert space of the theory.   A clever use of the previous ideas led to the well-known Dyson-Schwinger equations that can be used to compute the desired physical amplitudes.    We would like to stress here that both approaches, that of Feynman and of Schwinger, are widely different, the first and most important difference is the completely different use of the notion of fields, classical functions in the case of Feynman, operators in the case of Schwinger.    Later on, Schwinger tried to provide a new foundation to the theory of quantum fields upon a more solid basis introducing the primary notion of \textit{sources} that, combined with the notion of particles, allow to derive the notion of fields, not as a primary concept but as a ``book-keeping'' device to account for the propagation of the effects created by sources \cite{Sc98}.

\textit{Other approaches to a theory of quantum fields.}   Feynman and Schwinger's pictures are not by any means the only pictures that have been proposed to describe a relativistic theory of interactions.    We will just mention here the program started by R. Haag \cite{Ha12} aiming to provide a natural foundation for such theories using $C^*$-algebras of operators instead of fields and known today as Algebraic quantum field theories (AQFT) (see also \cite{Fr06} where the theory is formalized as a functor between categories, and references therein) and A. Connes attempts to provide a consistent description of the Standard Model by using non-commutative geometry \cite{Co19}.    It is important to realize that in these two approaches the notion of field itself is increasingly blurred and becomes subsidiary.   

Our point of view is that in spite of the remarks in the last paragraph, the notion of field is of fundamental relevance to understand the inner structure of a physical theory of interactions and that many difficulties that arise when using them, that is, when we have to combine the notion of classical field (like in Feynman's picture) with a standard quantum mechanical interpretation based on linear operators in Hilbert spaces, come from a poor understanding of the auxiliary structures used in the construction of the theory and, hitherto in the a wrong understanding of the role and nature of fields.     In this paper we propose a new understanding of the notion of physical fields which is intrinsically ``quantum'', that is, the notion of field is derived from a quantum description of the elements used in the construction of the theory, namely, the notion of ``test particles'', that will be called ``probing systems'', and how do they interact with the system under study.  In this sense fields are nothing but the histories of such interactions as described by probing systems.  The mathematical formalization of such notion of histories become functors (or homomorphisms) among groupoids, the domain of the functor the probing system, and codomain the groupoid describing the intrinsic structure of the system we are studying.   In the particular case that the probing system is ``classical'', that is, its corresponding algebra is Abelian, then, we will recover the standard (classical) definition of fields as functions defined on some auxiliary space.     In general, if the groupoid describing the probing system is non-classical, i.e., its algebra is non-commutative, then a field defined on such system, that is, a functor with domain such groupoid, describes a genuine quantum field.   Examples of various situations illustrating all these possibilities will be discussed along the text.

The paper will be organized as follows.   Sect. \ref{sec:functors} will be devoted to succinctly discuss the basic ideas on the groupoid picture of quantum mechanics that will be needed in what follows, and the notion of probing systems as the groupoidal formulation of the notion of ``test particles''.     These ideas will immediately give rise to the notion of ``quantum histories'' as functors among groupoids, hence the dictum that ``fields are functors'' among groupoids.  Sect. \ref{sec:functors} will end up discussing some simple examples that illustrate the scope and richness of the previous ideas.  In Sect. \ref{sec:category} the natural properties of groupoids describing probing systems will be exploited to argue that local fields must be identified with sheaves on a natural category associated to the groupoid describing the given probing system.   Then, we get immediately that the category of sheaves on probing groupoids is a topos and the fundamental gauge symmetry of physical theories is identified with natural transformations among functors describing physical fields.   Finally, after discussing more relevant examples that include gauge theories, Sect. \ref{sec:higher} will be devoted to introduce a new composition of local fields, called horizontal composition, and how this operation is promoted to a 2-category structure on the topos of local fields.   


\section{Fields are functors}\label{sec:functors}
The natural question that emerges from the discussion in the Introduction is: Is there a natural interpretation of physical fields in a pure quantum mechanical environment?  In other words, is there a way of identifying some or all of the many notions of fields discussed in the introduction as emerging from purely quantum mechanical notions?  In this section we will try to give a positive answer to this question by reflecting on the groupoid descripton of the simplest example of physical field provided by Feynman's path integral interpretation of quantum mechanics and the role played by clocks and test particles.

\subsection{Groupoids and fields: probing systems}

We will begin by succinctly reviewing the basic ideas on the groupoids picture of quantum mechanics.

\subsubsection{Groupoids and quantum systems}\label{sec:quantum}

Inspired by the early ideas of W. Heisenberg \cite{He25} and later on, J. Schwinger \cite{Sc59,Sc91}, it was proposed that the abstract description of the experimental setting describing a quantum system is provided by a groupoid \cite{Ci19a}.    Thus if the outputs of observations are given by elements $a,b,\dots$ of a set $\Omega$, and the transitions that the system can experiment are denoted as $\alpha \colon a \to b$, $\beta \colon c \to d$, etc.\footnote{In Schwinger's formalism, transitions $\alpha\colon a \to b$ correspond to selective measurements $M(b,a)$ \cite{Sc91}, that is, given a physical system a selective measurement is a process such that immediately before we will always get the value $a$ it the observable $A$ would be measured, and that immediately afterwards, we will always get the value $b$ would the observable $B$ be measured.}, they will be identified with the morphisms of a groupoid $\Gamma$ whose partial composition law will correspond to the concatenation of transitions.    The source and the target maps $s,t \colon \Gamma \to \Omega$, will be given by $s (\alpha) = a$, $t (\alpha) = b$, for $\alpha \colon a \to b$. Two transitions $\alpha$, $\beta$ that can be composed, that is, such that $t(\alpha) = s(\beta)$, will be said to be composable, and the set of all such pairs will be denoted by $\Gamma^{(2)} \subset \Gamma \times \Gamma$.   The composition of the transition $\alpha \colon a \to b$ and the transition $\beta \colon b \to c$, will be denoted as $\beta \circ \alpha \colon a \to c$, and it satisfies the associativity property:  $(\gamma \circ \beta) \circ \alpha = \gamma \circ (\beta \circ \alpha)$ provided that $s(\gamma) = t(\beta)$, and $s(\beta) = t (\alpha)$.  Moreover it will be assumed that for any outcome $a \in \Omega$, that is for any object of the groupoid $\Gamma$, there is a transition $1_a \colon a \to a$, which is neutral, that is $\alpha\circ 1_a = \alpha$, and $1_b \circ \alpha = \alpha$.      The existence of a transition $\alpha^{-1} \colon b \to a$ such that $\alpha^{-1} \circ \alpha = 1_a$, and $\alpha \circ \alpha^{-1} = 1_b$ reflects the \textit{principle of microscopic reversibility} as stated by R. Feynman \cite[p. 4]{Fe05}\footnote{Which is arguably the stronger of the assumptions we are making.}.    Thus, under the previous assumptions, the family of transitions $\alpha\colon a \to b$ of a given quantum system form an algebraic groupoid, denoted in what follows $\Gamma \rightrightarrows \Omega$ (emphasizing the source and target maps on the space of outcomes).  Many examples describing groupoids relevant to the description of simple quantum systems are provided in the literature (see, for instance, \cite{Ci19a,Ci19e,Ci19f,Ci24}).      

Given a groupoid $\Gamma \rightrightarrows \Omega$, the set of transitions $\alpha \colon a \to a$, for a given $a \in \Omega$, form a group, called the isotropy group of $\Gamma$ at $a$, and denoted by $\Gamma (a)$.  The collection of all isotropy groups 
\begin{equation}\label{eq:fundamental}
\Gamma_0 = \bigsqcup_{a \in \Omega}\Gamma(a) \, ,
\end{equation}
 defines a normal subgroupoid of $\Gamma$, called the fundamental subgroupoid (see, for instance, \cite{Ib19} for basic notions on groupoids).   The set of all transitions with source $a \in \Omega$, will be denoted $\Gamma_a$, and the transitions with target $b\in \Omega$ will be denoted $\Gamma^b$.  Thus, $\Gamma (a) = \Gamma_a \cap \Gamma^a$.    Given $a \in \Omega$, the orbit $\Gamma a \subset \Omega$, consists of all elements $b\in \Omega$ such that there is $\alpha \colon a \to b$.   The space of orbits of the groupoid $\Gamma$ will be denoted by $\Omega/\Gamma$.   If the groupoid has a single orbit, that is, for any $a,b\in \Omega$, there is a transition $\alpha \colon a \to b$, we say that the groupoid is transitive or connected.

Often we have to consider additional mathematical structures in our description of quantum systems, for instance, we may consider that the groupoid $\Gamma$ is a measure groupoid (see, for instance, \cite{Ci24}).   In such case the groupoid $\Gamma$ carries a class $[\nu]$ of measures consistent with the algebraic properties of the groupoid and it has associated a canonical von Neumann algebra $\nu (\Gamma)$ supported on the Hilbert space $L^2(\Gamma, \nu)$.    Such algebra is identified in a natural way with the ``algebra of observables'' of the theory and relates the groupoidal picture with the standard $C^*$-algebraic description of quantum mechanical systems.      As it will be discussed further in this paper, the description of various physical systems of interest involve the introduction of additional mathematical structures, most important smooth and topological structures.     A natural way to do that is to consider Lie, topological or  diffeological groupids (or even better, smooth sets groupoids).    Lie groupoids will appear in our presentation of kinematical groupoids (see Sect. \ref{sec:kinematical}) and later on, when stepping further in the categorical background for field theories, we will introduce other specific examples of topological groupoids.

\begin{remark}\label{rem:groupoid_class}    As indicated before, when we say that a groupoid $\Gamma \rightrightarrows \Omega$ describes a quantum system we are not being specific about the class of additional structures it carries, i.e., we are not being specific whether the groupoid $\Gamma$ is a topological groupoid, a Lie groupoid or it just carries a suitable Borel structure.    The different choices will depend on the specific properties of the system under study.  Thus, in some instances the groupoid $\Gamma$ will just be a finite or discrete countable groupoid, or in other cases it could be a Lie groupoid.   Later on (see Sect. \ref{sec:yang_mills}) we will see that a class of diffeological groupoids play a relevant role in the description of Yang-Mills theories.   Thus, in what follows,  we will not specify the class of the groupoid we are working with, and we will make this explicit in dealing with specific situations.
\end{remark}

\subsubsection{Clocks and test particles: probing systems}\label{sec:probing}

The key idea discussed in this section is that of a ``probing system'', our formal expression of the notion of ``test particle''.  In physical terms, a test particle is a system that interacts with the field we are studying and that is negligible in the sense that its presence does not affect the properties of the system being observed.   The notion of a ``test particle'' is thus an idealization of a class of experimental procedures that, more formally, can be understood as the introduction of an auxiliary system that will form part of the experimental setting, i.e., of the overall groupoid describing the full system as explained below, but that nevertheless it will not disturb the original system we were interested in.  Let us formalize these notions to get a suitable definition that could be stated in neat mathematical terms.  

The auxiliary system must be such that it provides accurate outcomes describing relevant properties of the system under study\footnote{For instance, the orientation of ferrite sards dropped on a magnetic field or the position and momentum of a test charged particle in an electric field.} and that, in spite of this, it will not affect the original system itself described by a groupoid $\Gamma \rightrightarrows \Omega$.   The first property will be taken into account by saying that if $\mathscr{P} \rightrightarrows \Sigma$ is the groupoid describing our ``probing system'', then there must exist a groupoid homomorphism $D \colon \widetilde{\Gamma} \to \mathscr{P}$, where $\widetilde{\Gamma}$ denotes the groupoid describing the total system under study (including the auxiliary system itself), such that for every possible transition $\alpha \colon a \to b$, the outcomes $x = D(a)$, $y = D(b) \in \Sigma$ are accurately recorded by the device $\mathscr{P}$\footnote{Think, for instance, of a test particle moving under the influence of a electromagnetic field, and the outcomes $x$ are the recordings of the position of the particle at a given time $t$.}.   The groupoid homomorphism property captures the consistency of the readings of the probing system with the composition of transitions of the total system, i.e., 
\begin{equation}\label{eq:detection}
D(\alpha \circ \beta) = D(\alpha) \circ D(\beta) \, , \qquad D(1_a) = 1_x \, .
\end{equation}     

The independence property, i.e., that the system $\mathscr{P}$ does not affect the original system $\Gamma$, is partially captured by observing that the probing system $\mathscr{P}$ is neccesarily interacting with the system we want to study, i.e., it itself is part of the total system $\widetilde{\Gamma}$, that will include both $\Gamma$ and $\mathscr{P}$ as subsystems, and thus, its own transitions $\sigma \colon x \to y \in \mathscr{P}$,  can be considered to be transitions of the total system $\widetilde{\Gamma}$.   We may conclude that there is an embedding of the groupoid $\mathscr{P}$ as a subgroupoid of $\widetilde{\Gamma}$, denoted as $W \colon \mathscr{P} \to \widetilde{\Gamma}$, that identifies each transition $\sigma \colon x \to y$ of the probing system $\mathscr{P}$, with a transition $W(\sigma) \colon W(x) \to W(y)$.    The two homomorphisms of groupoids $D$ and $W$ must be consistent in the sense  that $D \circ W = \mathrm{id}_\mathscr{P}$, or, in other words, the readings provided by the probing system of the transitions of the probing system as transitions in the total system must  be the same as the readings of the probing system as an standalone system.     

In the same way we can argue that the original system $\Gamma$ is a subsystem of the total system $\widetilde{\Gamma}$, that is, it defines a subgroupoid of $\widetilde{\Gamma}$ and the two subgroupoids $\Gamma$ and $W(\mathscr{P})$ must generate the full groupoid and have minimal intersection, that is 
\begin{equation}\label{eq:minimal}
\Gamma \cap W(\mathscr{P}) = 1_\Sigma \, ,
\end{equation}
where $1_\Sigma$ denotes the trivial groupoid defined by the set $\Sigma$ itself, that is, the groupoid whose objects are the elements of the set $\Sigma$ and whose only transitions are the units $1_x \colon x \to x$ (such groupoid will also be called the unit groupoid defined by the set $\Sigma$).  Equation (\ref{eq:minimal}) also implements the aforementioned requirement that ``the outcomes of the probing system must describe relevant properties of the system'' $\Gamma$.    The previous discussion can be summarized in the diagram below, Fig. (\ref{fig:short}).  We have now all the ingredients to provide a formal definition of a probing system.

\begin{figure}\begin{center}
\begin{tikzpicture} 
\draw [->] (0.4,0) -- (1.8,0);
\draw [->] (2.2,0) -- (3.8,0);
\draw [->] (4.2,0) -- (5.8,0);
\draw [->] (6.2,0) -- (7.8,0);
\draw [->] (2.2,-1.5) -- (3.8,-1.5);
\draw [->] (4.2,-1.5) -- (5.8,-1.5);
\draw [->] (1.9,-0.2) -- (1.9,-1.2);
\draw [->] (3.9,-0.2) -- (3.9,-1.2);
\draw [->] (5.9,-0.2) -- (5.9,-1.2);
\draw [->] (2.1,-0.2) -- (2.1,-1.2);
\draw [->] (4.1,-0.2) -- (4.1,-1.2);
\draw [->] (6.1,-0.2) -- (6.1,-1.2); 
\node  at (0,0) {$1_\Omega$};
\node  at (2,0) {$\Gamma$};
\node(tilGam) at (4,0) {$\widetilde{\Gamma}$};
\node(P)  at (6,0) {$\mathscr{P}$};
\node  at (8,0) {$1_*$};
\node  at (2,-1.5) {$\Omega$};
\node(Omega2)  at (4,-1.5) {$\Omega$};
\node(Sigma)  at (6,-1.5) {$\Sigma$};
\node  at (3,0.2) {$i$};
\node  at (5,0.2) {$D$};
\node  at (5,0.8) {$W$};
\node  at (5,-1.3) {$\pi$};
\node  at (5,-2.3) {$\phi$};
\draw[->, dashed] (Sigma) to [bend left=45](Omega2);
\draw[->, dashed] (P)  to [bend right=45] (tilGam);
\end{tikzpicture}
\caption{Short exact sequence describing a probing system as a subsystem of the total system}\label{fig:short}
\end{center}
\end{figure}
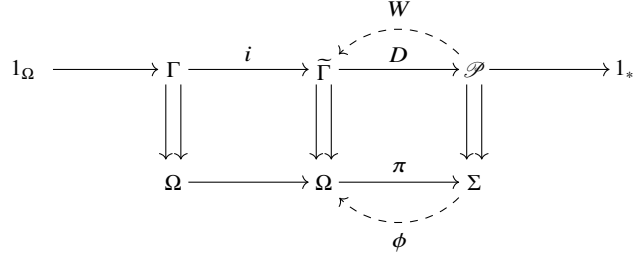

\begin{definition}\label{def:probing}
A probing system for the quantum system described by the groupoid $\Gamma \rightrightarrows \Omega$ is a quantum system described by the groupoid $\mathscr{P} \rightrightarrows \Sigma$ such that there exists a total groupoid $\widetilde{\Gamma} \to \Omega$ containing $\Gamma$ as a subgroupoid, a functor $D \colon \widetilde{\Gamma} \to \mathscr{P}$, called the ``detection'' functor, which is surjective, and a family of injective functors $W \colon \mathscr{P} \to \widetilde{\Gamma}$ which are right inverses of $D$, i.e., $D \circ W = \mathrm{id}_\mathscr{P}$, that describe the different ways that $\mathscr{P}$ can be identified with a subgroupoid of $\widetilde{\Gamma}$, that is, as a subsystem of $\widetilde{\Gamma}$, and such that the short sequence of functors in the upper row of diagram (\ref{fig:short}) is exact, that is, $\Gamma$ is a normal subgroupoid of $\widetilde{\Gamma}$ and $\widetilde{\Gamma}/\Gamma \cong \mathscr{P}$.
\end{definition}

A natural example of the previous ideas is provided by the notion of ``clock''.    A clock is just an auxiliary system used as a reference system to tag the evolution of the system under observation.    We assume that the outputs of such auxiliary system are just real numbers $t$ that will be identified with the physical time as measured by the reference system.   We may think of these outputs as numbers in the real line $\mathbb{R}$ or, alternatively, as a discrete set, for instance the integers $\mathbb{Z}$, or  as a periodic quantitity, like elements of $\mathbb{Z}_n$.  Thus the abstract description of a clock will be given by a groupoid $\mathscr{C} \rightrightarrows \mathbb{R}$.       Note that we are not concerned about the specific dynamics of the clock.    All that we care about is the accuracy of its readings, i.e., the real numbers $t_0,t_1$ associated to any observed transition $\alpha \colon a \to b$ of the total system $\Gamma$, $t_0 = D(a)$, $t_1 = D(b)$.     We can assume that in any experimental setting there is a clock (a master clock) that has been set up and synchronised with all the devices used in the observations and experiments that are carried on.      The groupoid homomorphism $W \colon \mathscr{C} \to \widetilde{\Gamma}$ defining $\mathscr{C}$ as a subgropoid of the total system is determined by the process of synchronisation that has been carried on among the various apparatuses involved in the experimental setting.     Arguably the simplest model for a clock $\mathscr{C}$ in the groupoid picture will be provided by the groupoid of pairs $\mathscr{C} = \mathbb{R} \times \mathbb{R} \rightrightarrows \mathbb{R}$, with composition law $(s,t) \circ (t, r) = (s,r)$.  

According to Def. \ref{def:probing} the probing system $\mathscr{C}$ determines the short exact sequence of groupoids (\ref{fig:short}):
\begin{equation}\label{eq:sequence}
1_\Omega \to \ker D = \Gamma \to \widetilde{\Gamma} \stackrel{D}{\to} \mathscr{C} \to 1_* \, ,
\end{equation}
where $1_\Omega$ and $1_*$ denote the canonical groupoid over the set $\Omega$ and the point $*$ respectively.    The cross section homomorphism $W \colon \mathscr{C} \to \tilde{\Gamma}$ splits the sequence  (\ref{eq:sequence}).    The existence of such splitting doesn't imply the triviality of the sequence itself, i.e., the groupoid $\widetilde{\Gamma}$ is not necessarily the product groupoid $\widetilde{\Gamma} = \Gamma \times \mathscr{C}$ (see \cite{Ma05,Ib23} for more details on the theory of extensions of groupoids).   In fact, the detection homomorphism $D$ defines a projection $\pi \colon \Omega \to \Sigma$, and the homomorphism  $W$ defines a cross section $\phi$ of the projection  map $\pi$ (see Fig. \ref{fig:short}).    Thus, our total system $\widetilde{\Gamma} \rightrightarrows \widetilde{\Omega}$ is not just a trivial product groupoid, in which case $\widetilde{\Omega} = \Omega \times \Sigma$, with $\Omega$ the space of objects of the groupoid $\Gamma$.   Instead, as indicated before and shown in diagram \ref{fig:short}, the total groupoid $\widetilde{\Gamma}$ has space of objects $\Omega$ and has the form $\Gamma \times_\Omega \pi^*\mathscr{P}$, where $\pi^* \mathscr{P}$ denotes the pull-back along $\pi \colon \Omega \to \Sigma$ of the groupoid $\mathscr{P} \rightrightarrows \Sigma$, that is 
$$
\pi^*\mathscr{P} = \{(b, \sigma, a) \in \Omega\times \mathscr{P} \mid \sigma \colon x \to y, \pi (a) = x, \pi (b) = y \} \, ,
$$ 
and $H \times_\Omega K$ denotes the fibrered product of the groupoids $H, K \rightrightarrows \Omega$, that is:   
$$
H\times_\Omega K = \{ (\alpha, \sigma) \in H \times K \mid s(\alpha) = s(\sigma), t(\alpha) = t(\sigma) \} \,.
$$

Note that $\Gamma \times_\Omega \pi^* \mathscr{P}$  can be identified with pairs of transitions $(\alpha, \sigma) \in \Gamma \times \pi^*\mathscr{P}$ such that the projections of the outcomes of $\alpha$ are the same as the outcomes of $\sigma$.    Under such circumstances we will say that the total system groupoid $\widetilde{\Gamma}$ is an almost product of the groupoids $\Gamma$ and $\mathscr{P}$ along the map $\pi \colon \Omega \to \Sigma$.

Now, a cross section homomorphism $W \colon \mathscr{P} \to \widetilde{\Gamma} = \Gamma \times_\Omega \pi^* \mathscr{P}$ can be identified with a homomorphism of groupoids $W_\Gamma \colon \mathscr{P} \to \Gamma$ by means of the formula:
$$
W_\Gamma (\sigma) = \mathbf{pr}_1 (W(\sigma)) \, ,
$$
with $\mathbf{pr}_1 \colon \Gamma \times_\Omega \pi^*\mathscr{P} \to \Gamma$, the canonical projection in the first factor.  Equivalently,
 $W(\sigma) = (W_\Gamma(\sigma), (\phi (y) ,\sigma, \phi (x)))$, $x = s(\sigma)$, $y = t(\sigma)$.    In what follows we will just refer to the groupoid homomorphism $W_\Gamma \colon K \to \Gamma$, instead of the cross section $W$ and we will use the same notation for both.     Finally, we observe that the detection map $D \colon \widetilde{\Gamma} \to \mathscr{P}$ is given by $D (\alpha, \sigma) = \sigma$, if $\widetilde{\Gamma}$ is the almost product of $\Gamma$ and $\mathscr{P}$, and that $\Gamma = \ker D$.
 
 
 \subsection{Feynman's quantum mechanics: paths, clocks and histories}\label{sec:feynman}

We can illustrate the previous ideas in the context of Feynman's quantum mechanics.   Feynman's path integral description of quantum mechanics can be understood as a field theory in 1D where the auxiliary spacetime of the theory is one-dimensional and can be identified with a clock.    The standard fields of the theory are paths, that is maps $\gamma \colon [t_0,t_1] \subset \mathbb{R} \to Q$, where $Q$ is the space of classical configurations of the system under scrutiny.    Such paths can be understood as (local\footnote{Any local path can be trivially extended to a global section.}) sections of the trivial bundle $\pi \colon E = Q \times \mathbb{R} \to \mathbb{R}$.   However, in our view, to be consistent with the groupoidal description of quantum mechanics, the ``history'' of the system is determined by a clock, that in its simplest conception can be described by the groupoid $\mathscr{C}$ of pairs $(t,s) \in \mathbb{R}\times \mathbb{R}$, also denoted $\mathscr{C} := P(\mathbb{R}) = \mathbb{R} \times \mathbb{R} \rightrightarrows \mathbb{R}$.    Then, a history would be a functor $W \colon P(\mathbb{R}) \to \Gamma$.   The inner groupoid structure of the system can be chosen to be again the groupoid of pairs, $\Gamma = P(Q) := Q \times Q\rightrightarrows Q$.  The interpretation of such groupoid is that the transitions of the system $\alpha \colon q_0 \to q_1$, are just pairs $(q_1,q_0)$, where $q_0$ will be the configuration of the system before the transition and $q_1$ the configuration at the end of the transition.   Then the functor $W$ assigns a configuration $q(t) := W(t)$ to each $t$, and a transition $W(\alpha) \colon q(t) \to q(s)$ for $\alpha \colon t \to s$.   Hence there is a natural one-to-one correspondence between functors $W \colon \mathscr{C} \to P(Q)$ and paths $\gamma \colon \mathbb{R} \to Q$.    Moreover, any functor $W \colon \mathscr{C} \to P(Q)$ can be restricted to any subgroupoid $\mathscr{C}' \subset P(\mathbb{R})$, $W_{\mathscr{C}'} \colon \mathscr{C}' \to P(Q)$, then we may consider the collection of all histories defined on finite intervals $[t_0,t_1]$ as restrictions of histories defined on the total real line.   We will denote by $P[t_0,t_1]$ the subgroupoid of pairs $[t_0,t_1]\times [t_0,t_1]$.

Note that histories can be composed in a natural way, that is, if $W_1 \colon P[t_0,t_1] \to P(Q)$,  $W_2 \colon P[t_1,t_2] \to P(Q)$ are two histories, then we can define the history 
$$
W_2 \circ W_1 \colon P[t_2,t_0] \to P(Q) \, ,
$$ 
which is the compositon of them that results from concatenation of the corresponding paths, that is: 
\begin{equation}\label{eq:composition_feynman}
W_2 \circ W_1 (t,s) = \left\{ \begin{array}{ll} W_1(t,s) \, , & \mathrm{if\, }  t_0 \leq t,s \leq t_1 \, , \\ W_2(t,t_1) \circ W_1(t_1,s) \, , & \mathrm{if\,} t_0 \leq s \leq t_1 \leq t \leq t_2 \\ W_2(t,s) \, , & \mathrm{if\, }  t_1 \leq t,s \leq t_2
\end{array} \right. \, .
\end{equation}
    Thus the space of histories becomes a category with this composition law.  Later on (see Sect. \ref{sec:higher}) we will see how these notions extend to a much more general realm (see also \cite{Ci24} for a detailed analysis of Feynman's path integral in the groupoid picture of quantum mechanics).


\subsection{Kinematical groupoids}\label{sec:kinematical}

The probing system $\mathscr{P}\rightrightarrows \Sigma$ can have additional kinematical properties, that is, in addition to be a ``clock'', it could provide readings of spatial coordinates in our laboratory.  Thus, the space of outcomes $\Sigma$ of the system could be identified with a subset of spacetime.    Even more, we can think that the role of the probing system includes the determination of the spacetime structure we are using to support the description of the system under consideration.   In fact, when we impose consistency conditions for the synchronisation of various probing systems around our laboratory, we introduce a kinematical structure on the space $\Sigma$ that leads to Einstein's kinematical model of spacetime as a Lorentzian manifold, e.g. Minkowski space in the case of Einsteinian relativistic systems\footnote{A detailed discussion of these notions in the groupoidal setting will be presented elsewhere.}.    

Thus we can provide a first definition of a \textit{relativistic groupoid} as a probing system whose space of objects is a suitable relativistic spacetime.      In such case we can, as in the case of a simple clock discussed above, consider that our relativistic groupoid is just the groupoid of pairs $\mathscr{P} = \Sigma \times \Sigma \rightrightarrows \Sigma$ with the standard composition law.   However the kinematical structure of the space of objects (outcomes) carries the geometrical structure provided by the principle of relativity we are using, i.e., in the case of Minkowski space $\Sigma = \mathbb{M}^4$, the constancy of the velocity of light in all inertial frames, that translates in the invariance of the theory under the action of Poincar\'es group $\mathcal{P}$ on $\mathbb{M}^4$.    In such case the appropriate kinematical groupoid would be the action groupoid $\mathcal{P} \times \mathbb{M}^4 \rightrightarrows \mathbb{R}^4$, whose source and target maps are given by $s((\Lambda, a), x) = x$, $t((\Lambda, a), x) = \Lambda x + a$, where $(\Lambda ,a) \in \mathcal{P}$, $\Lambda$ denotes a Lorentz transformation and $a\in \mathbb{R}^4$ a translation, and $x \in \mathbb{M}^4$ is a point in Minkowski space.   The composition in the action groupoid is given by the standard composition law $((\Lambda',a'),x') \circ ((\Lambda,a),x) = ((\Lambda'\Lambda,\Lambda' a + a'), x)$, with $x' = \Lambda x + a$.

Of course, such a definition of kinematical groupoid can be extended trivially to the family of spacetimes defined by the family of kinematical relativistic groups classified by Bacry and Levy-Leblond \cite{Ba68}, however it would be more appealing, and at the end much more interesting, to take the alternative road provided by the natural identification of the kinematical groupoid over Minkowski space $\mathscr{P}_{\mathrm{Minkowski}} = \mathcal{P} \times \mathbb{M}^4$ with the Riemann groupoid\footnote{The so called ``Riemann groupoid'' in these notes, is sometimes refered as the ``groupoid of frames'' \cite{Ma05} in the literature.} associated to the spacetime $\mathbb{M}^4$, that is:
\begin{equation}\label{eq:riemann}
\mathcal{R}(\mathbb{M}^4) = \{ (y,\Lambda_{yx}, x) \mid x,y \in \mathbb{M}^4, \Lambda_{yx} \colon T_x\mathbb{M}^4 \to T_y \mathbb{M}^4, \Lambda_{yx}^* {\eta_0}(y) = \eta_0(x) \} \, ,
\end{equation}
where in the previous definition $\eta_0$ denotes the canonical Lorenztian metric on $\mathbb{M}^4$, that is $\eta _0 = - (dt)^2 + (dx)^2 + (dy)^2 + (dz)^2$.  Even more, we would like to encode in the kinematical groupoid describing our spacetime the ``symmetries'' of the spacetime $\Sigma$, but such symmetries cannot be described in general by an action groupoid constructed from an isometry group (as in Poincar\'e's case) because such isometry group will be trivial generically, but in spite of this, the ``symmetries'' of the system are encoded in a groupoid, the Riemann groupoid of the given spacetime (see, for instance, \cite{We96,Ib19,Ib25b}).
Now, this will allow us to define the kinematic groupoid associated with any spacetime $(\mathscr{M}, \eta)$ as the Riemann groupoid $\mathcal{R}(\mathscr{M}) \rightrightarrows \mathscr{M}$ defined by a formula similar to (\ref{eq:riemann}).

Thus in what follows we will consider probing systems $\mathscr{P}\rightrightarrows \mathscr{M}$ that provide accurate readings of spacetime events $x\in\mathscr{M}$.   Such systems can be described by simple kinematical groupoids like the ones consider above or more elaborate constructions as it will be discussed in the coming sections.    In any case, the possible embeddings of our probing system in the system we are studying will be described by groupoid homomorphisms (or functors) $W \colon \mathscr{P} \to \Gamma$ where the groupoid $\Gamma \rightrightarrows \Omega$ will describe the ``inner structure'' of the quantum system in contraposition with the spacetime structure description provided by the probing system $\mathscr{P}\rightrightarrows \mathscr{M}$.  Note the the assumption of the existence of probing systems implies the existence of a bundle $\pi \colon \Omega \to \mathscr{M}$ as an \textit{a priori} notion of our theory, justifying in this way the use of bundles to model fields basic in any theory of fields as discussed in the Introduction. 

In the current setting the functors $W \colon \mathscr{P} \to \Gamma$ will be called generalised histories (or just ``histories'') for short, and they encapsulate both the notion of a classical field as a section $\phi$ of the projection bundle map $\pi$, that is, $\phi (x) := W(x)$, $x \in \mathscr{M}$, Eq. (\ref{eq:field_sections}), and the notion of relativistic transformation (or symmetry), that is, given a morphism $\sigma \colon x \to y \in \mathscr{P}$, then $W(\sigma) \colon \phi (x) \to \phi (y)$.   In other words, the field $\phi$ associated to the history $W$ transforms covariantly with respect to the ``relativistic symmetries'' of the given spacetime.    If we use a notation closer to the standard notation of sets and maps to indicate the morphisms on the groupoids, that is we denote as $y = \sigma (x)$\footnote{We aware that here $\sigma$ is not a map acting on the manifold $\mathscr{M}$ by isometries, it is just a morphism in the kinematical gropoid $\mathscr{P}$.} the morphism $\sigma \colon x \to y$, and $\phi (y ) = W(\sigma) (\phi (x))$, then the functor/groupoid homomorphism property reads:
\begin{equation}\label{eq:covariant}
\phi (\sigma (x) ) = W(\sigma)(\phi (x)) \, .
\end{equation}


\subsection{Functor Fields: ``classical'' vs. ``quantum fields''}\label{sec:functor_fields}

We can summarize the discussion in the previous sections by establishing the following definition that identifies the notion of physical fields and quantum histories.    Because the ideas and arguments that have led us to this notion are based upon the groupoidal description of quantum mechanics, the ultimate interpretation of the notion of fields introduced here is quantum.  In this sense we  might call them ``quantum fields''.  However, we must point out here that no ``quantization'' scheme is involved in their definition and, as it will be shown immediately afterwards, the standard ``classical'' fields are particular instances of them.   

In order to avoid unnecessary terminology conflicts we will chose to call the new notion of fields ``functor fields'', even though they are intrinsically quantum objects as they are functors among quantum systems.

\begin{definition}\label{def:quantum field}
A functor field (or just a field) is a functor $W$ from the category determined by a probing groupoid $\mathscr{P}\rightrightarrows \mathscr{M}$ into a groupoid $\Gamma \rightrightarrows \Omega$.   The groupoid $\Gamma$ will be called the internal quantum groupoid of the theory.
\end{definition}

As hinted in the Introduction, the distinction between ``classical'' and ``quantum'' fields can be understood through the nature of the probing groupoid $\mathscr{P}$.

\bigskip

\textit{``Classical Fields'':}  If the probing groupoid $\mathscr{P} \rightrightarrows \mathscr{M}$ is ``classical," meaning that it effectively reduces to the space of objects $\mathscr{M}$\footnote{Even, as it was mentioned before, a classical system is defined by a groupoid whose algebra is Abelian, which implies that the groupoid is essentially to a groupoid of the form $\underline{\mathscr{M}}\times A$ with $A$ an Abelian group \cite{Ci24}.}, called before the unit groupoid over $\mathscr{M}$ and denoted by $1_\mathscr{M} = \underline{\mathscr{M}}$, Sect. \ref{sec:probing}, then a functor $W \colon \mathscr{P} \to \Gamma$ becomes a map from the spacetime $\mathscr{M}$ to the objects $\Omega$ of $\Gamma$. If $\Gamma = Q \times Q \rightrightarrows Q$ as in the Feynman example, $W$ just picks up a configuration $q(x)$ for each $x \in \mathscr{M}$. More generally, if $\pi \colon \Omega \to \mathscr{M}$ is the projection associated with $\Gamma$, a functor $W$ from the unit groupoid $\underline{\mathscr{M}}$ to $\Gamma$ requires $W(x) = 1_{W(x)}$ for each $x \in \mathscr{M}$, essentially selecting an object $W(x) \in \Omega$ over each $x$. We recover the notion of a classical field as a section $\phi: \mathscr{M} \to \Omega$ where $\phi(x) = W(x)$, Eq. (\ref{eq:field_sections}), (see also Fig. \ref{fig:short}).

Thus a simple classical scalar field $\phi \colon \mathscr{M} \to \mathbb{R}$ fits Definition \ref{def:quantum field}, with the spacetime $\mathscr{M}$ being identified with the unit groupoid $\underline{\mathscr{M}} \rightrightarrows \mathscr{M}$
 (which represents a purely classical probing system that only identifies points in spacetime without inherent transitions between them); the internal quantum groupoid will be the unit groupoid $\Gamma = \underline{\mathbb{R}} \rightrightarrows \mathbb{R}$, whose objects are real numbers and whose only morphisms are identities $1_r \colon r \to r$.   A functor $W \colon \mathscr{P} \to \Gamma$ must map objects $x \in \mathscr{M}$ to objects $W(x) \in \mathbb{R}$, and identity morphisms $1_x$ in $\mathscr{P}$ to identity morphisms $W(1_x) = 1_{W(x)}$ in $\Gamma$. The map $x \mapsto W(x)$ is precisely a function $\phi \colon \mathscr{M} \to \mathbb{R}$. The functor condition $W(1_x) = 1_{W(x)}$ is automatically satisfied for any such function.

Thus, as an extreme situation, we can consider again the set $\Omega$ as a groupoid $\Gamma = \underline{\Omega}$.  Note that in this case both the source and target maps $s,t$ coincide and they can be identified with the identity map.   Such groupoid is totally disconnected and, in physical terms, there are no transitions (or ``quantum jumps'') among its elements.  Such groupoid portraits a classical system defined on the set $\Omega$ and, provided an ancillary Borel structure and a reference state, i.e., a probability measure on it is chosen, its algebra of observables is the Abelian von Neumann algebra $\nu (\Omega) = L^\infty(\Omega)$.    It can be shown that classical systems will always be defined in these terms (see \cite{Ci24,Ci20}). 

Summarizing, if our probing system is classical in the previous sense, it will be described by a set-groupoid $\mathscr{M}$, that we may identify with a (local) spacetime structure determined by our experimental setting.  Then a functor $W \colon \underline{\mathscr{M}} \to \underline{\Omega}$, will be just a map $\phi \colon \mathscr{M} \to \Omega$.  On the other hand, the fact that $\underline{\mathscr{M}}$ is an actual probing system for our theory, implies that there is a detection functor $D \colon \underline{\Omega} \to \underline{\mathscr{M}}$, where we are now assuming that $\Omega$ captures all the aspects involved in the classical description, including the spacetime itself.  Thus, the detection map $D$ becomes just a projection map $\pi \colon \Omega \to \mathcal{M}$, and the functor $W$ becomes a section of such map or, in other words, a  standard classical field.  

\bigskip


\textit{``Quantum Fields'':}  On the other hand if the probing groupoid $\mathscr{P}$ is ``non-classical'', for instance, if it involves non-commutative structures (like groupoids arising from non-commutative algebras or quantum groups) or encodes non-trivial quantum transitions, the resulting functor field $W \colon \mathscr{P} \to \Gamma$ becomes inherently quantum. The functoriality condition $W(\sigma \circ \rho) = W(\sigma) \circ W(\rho)$ imposes non-trivial constraints reflecting the quantum nature of the probing system's composition law. The field $W$ is no longer just a function on spacetime but maps the transitions/symmetries of the probing system $\mathscr{P}$ to transitions/symmetries in the internal groupoid $\Gamma$.  Non-trivial examples of functor fields defined on non-trivial probing groupoids will be considered later on (see Sect. \ref{sec:top_gauge}, \ref{sec:yang_mills}).

Thus, functor fields defined on probing systems which are not classical, have an intrinsic quantum flavour, however, this non-commutative structure could also be tied to symmetries of the system, like in the discussion of kinematical groupoids, Sect. \ref{sec:kinematical}.  Moreover, functor fields do not require any \textit{ad hoc} quantization method.  They are naturally defined on the given quantum systems, that is, groupoids, used in the theory.   However we can recover the Streater-Wightman axiomatics by using a representation of the theory.  Such matters will be discussed at length in further works.


\subsection{Functor fields and representations of groupoids}\label{sec:representations}

The concept of functor is very rich and can be looked up in different ways.   In what follows we are going to exploit some of these possibilities to understand better some of the implications of the previous definition.   We will begin by focusing first in functors as representations and, in the coming section, we will look at fields as sheaves on categories.

A functor $F \colon \mathsf{C} \to \mathsf{D}$ defined on the category $\mathsf{C}$ and taking values in the category $\mathsf{D}$ can be thought as a representation of $\mathsf{C}$ on $\mathsf{D}$.   For instance, if the category $\mathsf{C}$ is a group $G$, that is, we think of a group $G$ as a category $\mathsf{G}$ with just one object $\star$ and such that all its morphisms $g \colon \star \to \star$ are invertible, then a functor $R \colon \mathsf{G} \to \mathsf{Vect}$, where $\mathsf{Vect}$ denotes the category of complex linear spaces, is in fact a linear representation of $G$.   The functor $R$ assigns a linear space $V$ to the object $\star$, and a linear map $R(g) \colon V \to V$, to any $g \in G$, in such a way that 
\begin{equation}\label{eq:representation_group}
R(gg') = R(g)R(g') \, , \quad  \forall g,g' \in G \, , \quad R(e) = \mathrm{id}_V \, .
\end{equation}
In the same way if $\mathsf{Hilb}$ denotes the category of Hilbert spaces, that is the category whose objects are complex separable Hilbert spaces and whose morphisms are bounded linear operators, a unitary representation $U$ of $G$ will be a functor $U \colon \mathsf{G} \to \mathsf{Hilb}$, whose codomain is the subgroupoid of unitary operators among Hilbert spaces.    Additional topological properties could be asked to the representations under consideration depending on the topological properties of the group $G$, for instance we can demand that $R$ is strongly continuous, that will be specified in particular instances.

Thus, we may think of a functor $W \colon \mathscr{P} \to \Gamma$, as a representation of the groupoid $\mathscr{P}$ in the category defined by the groupoid $\Gamma$ itself\footnote{Recall that a groupoid is a (small) category all whose morphisms are invertible.}.   In general a functor $R \colon~\mathscr{P} \to \mathsf{Set}$ will be a representation of the groupoid $\mathscr{P}\rightrightarrows M$ in the category of sets.  Such functor can be thought as a representation of $\mathscr{P}$ on the groupoid associated to a surjective map $\pi \colon \Omega \to M$, where we associate to each  $x\in M$ the set $R(x) = \pi^{-1}(x) = \Omega_x$, the fibre of $\pi$ at $x$; and to each morphism $\alpha \colon x \to y\in \mathscr{P}$ an invertible map $R(\alpha) \colon R(x) = \Omega_x \to R(y) = \Omega_y$, such that 
\begin{equation}\label{eq:representation_groupoid}
R(\alpha \circ \beta) = R(\alpha) R(\beta) \, , \quad  \forall (\alpha, \beta) \in \Gamma^{(2)} \, , \quad R(1_x) = \mathrm{id}_{\Omega_x} \, ,
\end{equation}
in agreement with (\ref{eq:representation_group}).   The disjoint union of all sets $R(x) = \Omega_x$ will be:
$$
\Omega = \bigsqcup_{x \in M} \Omega_x \, ,
$$ 
and the projection $\pi \colon \Omega \to M$, will be the canonical map $\pi (\xi) = x$, for any $\xi \in \Omega_x$.   Note that associated to any surjection $\pi \colon \Omega \to M$, there is a groupoid, denoted by $\mathrm{Aut \, }(\pi)$ in what follows, whose morphisms are bijections $\varphi_{yx} \colon \Omega_x \to \Omega_y$, and the composition is the standard composition of maps.  In this sense a representation $R$ of the groupoid $\mathscr{P}$ in the category of sets is just a functor $R \colon \mathscr{P} \to \mathrm{Aut\,}(\pi)$.   

Similarly to the case of groups, a linear representation of the groupoid $\mathscr{P}$ will be a functor $R \colon \mathscr{P} \to \mathsf{Vect}$ or, a functor $R \colon \mathscr{P} \to \mathrm{Aut\,}(\pi)$, where $\pi \colon \Omega \to M$ will be a projection whose fibres are linear spaces.   If the groupoid $\mathscr{P}$ carries additional topological structures, for instance is a locally compact topological groupoid, we may ask the representation, and in consequence the map $\pi \colon \Omega \to M$ to carry an additional topological structure that will make the functor $R$ continuous.    A natural request  is that $\pi \colon \Omega \to M$ defines a topological vector bundle and the map 
$$
R \colon \mathscr{P} * \Omega \to \Omega \, ,  \quad R(\alpha, \xi) := R(\alpha)\xi \, , \quad (\alpha, \xi) \in \mathscr{P}* \Omega \, ,
$$ 
is continuous, where $\mathscr{P} * \Omega \subset \mathscr{P} \times \Omega$, denotes all pairs $(\alpha, \xi)$ which are compatible, i.e., such that $\xi \in \Omega_{s(\alpha)}$.

A family of functors fields, relevant as physical fields, would be functors $W \colon \mathscr{P} \to \mathsf{Set}_G$, i.e., functors defined on a given groupoid $\mathscr{P}$ with values in the category of $G$-sets, that is, the category whose objects are sets $A$ carrying an action $(g,a) \mapsto ga$, $g \in G$, $a\in A$, of the group $G$, and whose morphisms are $G$-equivariant maps $\varphi \colon  A \to B$, $\varphi (g a) = g \varphi (a)$, for all $g\in G$, $a \in A$.    As indicated before any such functor can be thought of as a functor $W \colon \mathscr{P} \to \mathrm{Aut\,}_G(\pi)$, where $\mathrm{Aut\,}_G(\pi)$ denotes the groupoid of automorphisms of the projection map $\pi \colon \Omega \to M$, but now each map $\varphi_{yx} \colon \Omega_x \to \Omega_y$ is equivariant with respect the action of $G$ on each fibre $\Omega_x$ of $\pi$.    Thus, a functor field $W \colon \mathscr{P} \to~\mathsf{Set}_G$ assigns to each object $x \in M$ (the space of objects of $\mathscr{P}$) a $G$-set $\Omega_x := W(x)$, and to each morphism $\sigma \colon x \to y$ in $\mathscr{P}$ a $G$-equivariant map $W(\sigma) \colon \Omega_x \to \Omega_y$. This functor field naturally incorporates the global symmetries described by the group $G$.  For instance, if $\Gamma$ itself were equivalent to the action groupoid $G \times X$ for some $G$-set $X$, then a functor $W\colon \mathscr{P} \to G \times X$ would encode how the system's configurations (related to $X$) transform under both the probing system's transitions (morphisms in $\mathscr{P}$) and the internal $G$-symmetry.

As before, considering additional topological properties leads us to consider a topological $G$-bundle $\pi\colon \Omega \to M$, that is, a topological bundle with structure group $G$.   If we impose further restrictions and we ask that the group $G$ acts freely on the sets of the category, we will have that a functor $W \colon \mathscr{P} \to \mathrm{Aut\,}_G(\pi)$ will be a representation of $\mathscr{P}$ on the topological principal bundle $\pi \colon \Omega \to M$ with structure group $G$.    In such case the functor $W$ is commonly called a (topological left-) action of the groupoid $\mathscr{P}$ on the $G$-principal bundle $\pi \colon \Omega \to M$. 

Thus, the ``representation'' perspective highlights how the probing system $\mathscr{P}$ acts on the structured space $\Omega$ describing the outcomes of the quantum system via the transitions encoded in $\Gamma$, respecting the internal symmetries and structures of $\Gamma$.



\subsubsection{A reconstruction theorem}\label{sec:reconstruction}

Often, the situations we are leading to consider involve the choice of groupoids $\mathscr{P}\rightrightarrows M$, carrying a smooth structure, in which case, provided that the smooth structure is compatible with the groupoid structure\footnote{The composition map and the inverse map of the groupoid must be smooth and, in addition, it is required that the source and target maps must be submersions \cite{Ma05}.}, we will say that $\mathscr{P}$ is a Lie groupoid.     Then a smooth functor $W \colon \mathscr{P} \to \mathrm{Aut\,}_G(\pi)$, where $\pi \colon \Omega \to M$ is a smooth $G$-principal bundle, will be called a smooth (left) action of $\mathscr{P}$ on $\pi \colon \Omega \to M$.

\bigskip

Any representation/functor $W \colon \mathscr{P} \to \Gamma$, of the groupoid $\mathscr{P}$ on the groupoid $\Gamma$ restricts to a representation $W_{x_0} \colon \mathscr{P}(x_0) \to \Gamma (W(x_0))$, where $\mathscr{P}(x_0)$ denotes the isotropy group of $\mathscr{P}$ at $x_0 \in M$, and similarly for $\Gamma (W(x_0))$.    The isotropy groups $\mathscr{P}(x)$ corresponding to points $x\in M$ in the same orbit are all isomorphic.   If we restrict the groupoid $\mathscr{P}$ to a given orbit $\mathcal{O} \subset M$, then the restriction $W_\mathcal{O}$ of the functor $W$ to the orbit $\mathcal{O}$, that is, to the groupoid $\mathscr{P}_\mathcal{O} \subset \mathscr{P}$, is determined by the restriction $W_{x_0}$ to a given point $x_0 \in \mathcal{O}$.    The converse is also true and this is the content of the following reconstruction theorem stated below.

\begin{theorem}\label{thm:reconstruction}
Given a group $G$, a connected groupoid $\mathscr{P} \rightrightarrows M$, and a group homomorphism $W_0 \colon \mathscr{P} (x_0) \to G$, there exists a principal $G$-bundle $\pi \colon \Omega \to M$ and a functor $W \colon \mathscr{P} \to \mathrm{Aut\, }_G(\pi)$, such that the restriction of $W$ to $\mathscr{P}(x_0)$ is $W_0$.   Such correspondence is one-to-one. 
\end{theorem}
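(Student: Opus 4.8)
The plan is to realize the pair $(\pi,W)$ via the classical correspondence between transitive groupoids and principal bundles, followed by an associated-bundle construction driven by $W_0$. First I would produce the principal bundle carrying the isotropy group $H := \mathscr{P}(x_0)$. Since $\mathscr{P}$ is connected, the source fibre $\mathscr{P}_{x_0} = \{p \in \mathscr{P} \mid s(p) = x_0\}$ maps onto $M$ through the target map $t$, and the right action $(p,h) \mapsto p \circ h$ of $H$ on $\mathscr{P}_{x_0}$ is free and transitive on each fibre $t^{-1}(x) = \mathscr{P}_{x_0}^{x}$: transitivity is connectedness of $\mathscr{P}$, freeness is the group structure of the isotropy. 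Hence $t \colon \mathscr{P}_{x_0} \to M$ is a principal $H$-bundle. Using $W_0 \colon H \to G$ I would then form the associated principal $G$-bundle
\[ \Omega := (\mathscr{P}_{x_0} \times G)/H, \qquad (p,g)\cdot h := (p \circ h,\, W_0(h)^{-1} g), \]
writing $[p,g]$ for classes, setting $\pi[p,g] := t(p)$ and letting $G$ act by $[p,g]\cdot g' := [p,\, g g']$; that $\pi$ is a principal $G$-bundle over $M$ is routine.

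Next I would define the functor on morphisms by $W(\alpha)[p,g] := [\alpha \circ p,\, g]$ for $\alpha \colon x \to y$ with $t(p)=x$. The checks are short: well-definedness follows from $\alpha \circ (p \circ h) = (\alpha \circ p) \circ h$; $G$-equivariance is immediate from the definition of the $G$-action; and functoriality $W(\beta \circ \alpha) = W(\beta)\circ W(\alpha)$, $W(1_x) = \mathrm{id}$ follows from associativity and unitality of $\circ$ in $\mathscr{P}$, so that $W \colon \mathscr{P} \to \mathrm{Aut}_G(\pi)$. To see that it restricts to $W_0$, I would trivialize the fibre $\Omega_{x_0}$ by $g \mapsto [1_{x_0}, g]$; for $h \in H$ this gives $W(h)[1_{x_0}, g] = [h, g] = [1_{x_0},\, W_0(h)\, g]$, i.e. $W(h)$ is left translation by $W_0(h)$, which is precisely $W_0$ read inside $\mathrm{Aut}_G(\Omega_{x_0}) \cong G$.

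For the one-to-one claim I would show that any pair $(\pi' \colon \Omega' \to M,\, W')$ restricting to $W_0$ at $x_0$ is isomorphic to the one just built, so the construction inverts the restriction map. After fixing a base point $\xi_0 \in \Omega'_{x_0}$ — which pins down the identification $\mathrm{Aut}_G(\Omega'_{x_0}) \cong G$ and turns the hypothesis into the equality $W'(h)(\xi_0) = \xi_0 \cdot W_0(h)$ — I would define $\Phi \colon \Omega \to \Omega'$ by $\Phi[p,g] := W'(p)(\xi_0)\cdot g$. One checks it is well defined (this is exactly where $W'(h)(\xi_0) = \xi_0 \cdot W_0(h)$ and the equivariance of $W'(p)$ are used), $G$-equivariant, fibrewise bijective, and that it intertwines the functors, $\Phi \circ W(\alpha) = W'(\alpha) \circ \Phi$. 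Being an equivariant bundle map over $\mathrm{id}_M$, $\Phi$ is an isomorphism, which yields the asserted bijection.

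The step I expect to be the genuine obstacle is not the construction but the precise formulation of the one-to-one statement. A functor $W$ only determines a homomorphism $\mathscr{P}(x_0) \to \mathrm{Aut}_G(\Omega_{x_0})$, and the latter group is isomorphic to $G$ only after choosing a point of the fibre, so the correspondence is canonical only up to conjugation in $G$. I would therefore state the bijection at the level of \emph{based} bundles (with a marked $\xi_0 \in \Omega_{x_0}$), or equivalently between conjugacy classes of homomorphisms $\mathscr{P}(x_0) \to G$ and isomorphism classes of pairs $(\pi, W)$, and carry the chosen base point through the uniqueness argument. Should a topological or Lie structure be imposed, the only additional ingredient is that $s$ (hence $t$ on $\mathscr{P}_{x_0}$) admits local sections, making the source-fibre bundle locally trivial and rendering $\Phi$ continuous or smooth; all the remaining verifications are purely algebraic.
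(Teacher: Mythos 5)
Your construction is essentially identical to the paper's: the same associated-bundle quotient $\Omega = (\mathscr{P}_{x_0}\times G)/\mathscr{P}(x_0)$ with the same twisted right action, the same functor $W(\alpha)[p,g]=[\alpha\circ p,g]$, and the same identification of the fibre over $x_0$ with $G$ to recover $W_0$. Your treatment of the one-to-one claim is in fact more careful than the paper's (which only asserts that uniqueness follows from the construction): the explicit intertwining isomorphism $\Phi[p,g]=W'(p)(\xi_0)\cdot g$ and the observation that the correspondence is canonical only after marking a point of the fibre, or else only up to conjugation in $G$, are correct and address a genuine imprecision in the statement.
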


\begin{proof}  Let us consider the set $\mathscr{P}_{x_0} \times G$, where $\mathscr{P}_{x_0} = \{ \alpha\colon x_0 \to x \}$ denotes the set of morphisms whose source is $x_0$, i.e., $\mathscr{P}_{x_0} = s^{-1}(x_0)$, and the action of the group $\mathscr{P}(x_0)$, that is the isotropy group of $\mathscr{P}$ at $x_0$, on the right defined by:
\begin{equation}\label{eq:action}
(\alpha, g) \cdot \gamma_0 = (\alpha \circ \gamma_0, W_0(\gamma_0^{-1}) g) \, ,
\end{equation}
where $\alpha \colon x_0 \to y \in \mathscr{P}_{x_0}$, $g \in G$, $\gamma_0 \colon x_0 \to x_0 \in \mathscr{P}(x_0)$. 

The group $\mathscr{P}(x_0)$ acts freely on $\mathscr{P}_{x_0}$, hence the action (\ref{eq:action}) on $\mathscr{P}_{x_0} \times G$ is free.   We consider the quotient space 
$$
\Omega := \mathscr{P}_{x_0} \times_{\mathscr{P}(x_0)} G = \mathscr{P}_{x_0} \times G/ \mathscr{P}(x_0) \, . 
$$  
The projection map $\pi \colon \Omega \to M$ is given by $\pi([\alpha, g]) = t(\alpha)$.   It is well-defined because if $(\alpha', h') \sim (\alpha, h)$, then $\alpha' = \alpha \circ \gamma_0$ for some $\gamma_0 \in \mathscr{P}(x_0)$, and $t(\alpha') = t(\alpha \circ \gamma_0) = t(\alpha)$.

The group $G$ acts on $\Omega$ on the right via $[\alpha, g]\cdot h = [\alpha, gh]$, where $[\alpha, h]$ denotes the equivalence class containing $(\alpha,h)$. The action is well defined because if $(\alpha',h') \in [\alpha,h]$, then $\alpha' = \alpha \circ \gamma_0$ and $h' = W_0(\gamma_0^{-1})h$. Then $[\alpha', h']\cdot g = [\alpha \circ \gamma_0, W_0(\gamma_0^{-1})hg] = [\alpha, hg] = [\alpha, h] \cdot g$.  Notice that this action is free on the fibres $\Omega_x = \pi^{-1}(x)$.   In fact, if $[\alpha,h] \in \Omega_x$, then $t(\alpha) = x$.   Let $[\alpha,h] \in \Omega_x$ such that $[\alpha,h] \cdot g = [\alpha ,h]$. Then there exists $\gamma_0 \in \mathscr{P}(x_0)$ such that $\alpha = \alpha\circ \gamma_0$, and $hg = W_0(\gamma_0^{-1})h$.   But $\gamma_0 = 1_[x_0]$, and $hg = h$, that is $g = e$.

The functor $W \colon \mathscr{P} \to \mathrm{Aut\,}_G(\pi)$ is defined as follows: For a morphism $\beta \colon y \to z$ in $\mathscr{P}$, we  define $W(\beta) \colon \Omega_y \to \Omega_z$ as 
\begin{equation}\label{eq:W_definition}
W(\beta) ([\alpha, g]) = [\beta \circ \alpha, g] \, ,
\end{equation}
(notice that  an element in $\Omega_y = \pi^{-1}(y)$ is of the form $[\alpha, g]$ where $\alpha \colon x_0 \to y$).

We check that (\ref{eq:W_definition}) is well-defined. Suppose $[\alpha', g'] = [\alpha, g]$. Then $\alpha' = \alpha\circ \gamma_0$ and $g' = W_0(\gamma_0^{-1})g$ for some $\gamma_0 \in \mathscr{P}(x_0)$.
$W(\beta)([\alpha', g']) = [\beta \circ \alpha', g'] = [\beta \circ \alpha \circ \gamma_0, W_0(\gamma_0^{-1})g]$.
Is this equivalent to $W(\beta)([\alpha, g]) = [\beta \circ \alpha, g]$? Yes, because $(\beta\circ \alpha\circ \gamma_0, W_0(\gamma_0^{-1})g)$ is obtained from $(\beta\circ\alpha, g)$ by acting with $\gamma_0$ on the right according to (\ref{eq:action}). So $W(\beta)$ is well-defined.

The map $W(\beta)$ sends $\Omega_y$ to $\Omega_z$ because $t(\beta \circ\alpha) = t(\beta) = z$.
It is an automorphism in $\mathrm{Aut\,}_G(\pi)$ (i.e., a $G$-equivariant isomorphism covering the identity on $M$).  Indeed it respects the $G$-action: $W(\beta)( [\alpha, g]\cdot h) = W(\beta)([\alpha, gh]) = [\beta\circ \alpha, gh] = [\beta\circ\alpha, g] \cdot h =  (W(\beta)([\alpha, g]))\cdot h$.   It is invertible with inverse $W(\beta^{-1})$.

We check functoriality. For $\beta_1 \colon y \to z$, $\beta_2 \colon z \to w$:
$W(\beta_2 \circ \beta_1) ([\alpha, g]) = [(\beta_2 \circ \beta_1) \circ \alpha, g] = [\beta_2 \circ (\beta_1 \circ \alpha), g]$, and
$(W(\beta_2) \circ W(\beta_1)) ([\alpha, g]) = W(\beta_2) ( W(\beta_1) ([\alpha, g]) ) = W(\beta_2) ( [\beta_1 \circ \alpha, g] ) = [\beta_2 \circ (\beta_1 \circ \alpha), g]$, which  match.   Also, $W(1_y)([\alpha, g]) = [1_y \circ \alpha, g] = [\alpha, g]$, so $W(1_y) = \mathrm{id}_{\Omega_y}$.
Thus, $W$ is a functor. 

Finally, we check the restriction $W|_{\mathscr{P}(x_0)}$. Let $\beta_0 \in \mathscr{P}(x_0)$. $W(\beta_0)$ maps $\Omega_{x_0}$ to $\Omega_{x_0}$. Elements of $\Omega_{x_0}$ are $[\alpha_0, g]$ with $\alpha_0 \in \mathscr{P}(x_0)$. $W(\beta_0)([\alpha_0, g]) = [\beta_0 \circ\alpha_0, g]$. We can identify $\Omega_{x_0}$ with $G$ via the map $j: G \to \Omega_{x_0}$ given by $g \mapsto [1_{x_0}, g]$. The inverse maps $[\alpha_0, g]$ to $W_0(\alpha_0)g$. Under this identification, the action of $W(\beta_0)$ corresponds to the map $g \mapsto W_0(\beta_0) g$ on $G$. That is, $j^{-1}(W(\beta_0)(j(g))) = j^{-1}(W(\beta_0)[1_{x_0}, g]) = j^{-1}([\beta_0, g]) = W_0(\beta_0)g$. This shows that the action induced by $W$ on the fibre over $x_0$ corresponds precisely to the original homomorphism $W_0$. The uniqueness follows from the construction.
\end{proof}

\begin{corollary}\label{cor:top_smooth}
Under the conditions of Thm. \ref{thm:reconstruction}, if the groupoid $\mathscr{P}$, the group $G$ and the group homomorphism $W_0$ belong to the category of topological spaces (or alternatively to the category of smooth manifolds, diffeological spaces or smooth sets), the same holds for the bundle $\Omega$ and the functor $W$.
\end{corollary}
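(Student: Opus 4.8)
The strategy is to observe that every object and map produced in the proof of Theorem~\ref{thm:reconstruction} is assembled from a fixed list of categorical operations---subobjects, finite products, fibered products and quotients by a free group action---together with the structure maps of the data ($s$, $t$, the composition and inversion of $\mathscr{P}$, the group operations of $G$, and $W_0$). Since each of these operations is available and structure-preserving in every one of the listed categories (topological spaces, smooth manifolds, diffeological spaces, smooth sets), it suffices to check that at each stage the relevant map is a morphism of the category and that the quotient is a legitimate object carrying the expected bundle structure. I would organize the argument exactly along the stages of the original construction.

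First I would equip $\mathscr{P}_{x_0} = s^{-1}(x_0)$ with its induced structure. In the Lie case the hypothesis that $s$ is a submersion makes $\mathscr{P}_{x_0}$ an embedded submanifold and the isotropy $\mathscr{P}(x_0) = s^{-1}(x_0)\cap t^{-1}(x_0)$ an embedded Lie subgroup; in the topological and diffeological/smooth-set settings one simply takes the induced subspace (respectively the sub-diffeology or induced smooth-set structure). The key structural fact to record here is that the restriction $t\colon \mathscr{P}_{x_0}\to M$ is a principal $\mathscr{P}(x_0)$-bundle: this is standard for a transitive groupoid and follows in each category from the existence of local sections of $s$ (guaranteed by the submersion hypothesis, and taken as the definition of the relevant notion of bundle in the diffeological and smooth-set cases).

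Next I would note that $\Omega = \mathscr{P}_{x_0}\times_{\mathscr{P}(x_0)} G$ is precisely the $G$-bundle associated to the principal bundle $t\colon\mathscr{P}_{x_0}\to M$ through the homomorphism $W_0\colon\mathscr{P}(x_0)\to G$. The defining action~(\ref{eq:action}) is a composite of the composition map of $\mathscr{P}$, inversion, $W_0$ and multiplication in $G$, hence a morphism of the category; freeness was already established in Theorem~\ref{thm:reconstruction}. The associated-bundle construction therefore yields an object $\Omega$ together with a principal $G$-bundle projection $\pi$, whose local trivializations are transported from those of $\mathscr{P}_{x_0}\to M$. Finally, for a morphism $\beta\colon y\to z$ the map $W(\beta)([\alpha,g])=[\beta\circ\alpha,g]$ is induced by left composition with $\beta$, a restriction of the (continuous/smooth) composition of $\mathscr{P}$; passing to the quotient it descends to a morphism $W(\beta)\colon\Omega_y\to\Omega_z$, and smoothness/continuity of the total map $\mathscr{P}*\Omega\to\Omega$ follows from that same composition map, so that $W$ is a functor in the enriched sense.

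The main obstacle is the formation of the quotient together with the verification of local triviality, which is the only step where the four categories genuinely diverge. In the smooth-manifold case one must ensure that the free action of $\mathscr{P}(x_0)$ on $\mathscr{P}_{x_0}\times G$ is also proper, so that the quotient manifold theorem applies; this is where I would invoke the submersion hypothesis (and, if necessary, properness of the groupoid) to produce slices and hence local trivializations. The purely topological setting requires the same properness/local-triviality input. By contrast, in the diffeological and smooth-set categories the quotient always exists as a colimit, so no properness is needed; there the work is instead to confirm that the quotient diffeology (respectively smooth-set structure) makes $\pi$ a bundle in the appropriate internal sense and that each $W(\beta)$ is smooth for the quotient structure---both of which reduce, via the subduction $\mathscr{P}_{x_0}\times G\to\Omega$, to the smoothness of the maps upstairs already verified above.
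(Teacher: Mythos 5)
Your proposal follows essentially the same route as the paper: trace through the construction of Theorem~\ref{thm:reconstruction} and check that every operation used (restriction to $s^{-1}(x_0)$, the action~(\ref{eq:action}), the quotient, the maps $W(\beta)$) is built from the structure maps of $\mathscr{P}$, $G$ and $W_0$, hence is a morphism in whichever of the listed categories one works in. The paper's own proof is a three-sentence version of exactly this argument (quotient topology on $\Omega$, openness of $\pi$ from the submersion property of $t$, continuity of the $G$-action and of $W$ from continuity of the compositions). Where you go beyond the paper is in flagging that the quotient step is the one place the categories genuinely diverge: in the smooth-manifold (and to a lesser extent topological) setting, freeness of the $\mathscr{P}(x_0)$-action alone does not make $\Omega$ a manifold or $\pi$ locally trivial --- one needs properness or an equivalent slice/local-section hypothesis --- whereas in the diffeological and smooth-set cases the quotient always exists and the issue is only that the induced structure makes $\pi$ a bundle in the internal sense. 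The paper's proof silently passes over this point, so your version is, if anything, the more careful of the two; it is worth noting that this is precisely why the authors ultimately prefer diffeological and smooth-set groupoids in Sect.~\ref{sec:yang_mills}.
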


\begin{proof}
Note that if $\mathscr{P}$ is, for instance, a topological groupoid, $G$ is a topological group, and $W_0$ is continuous, the construction in the proof of Thm. \ref{thm:reconstruction} yields a continuous functor $W$.  Indeed, the quotient space $\Omega = \mathscr{P}_{x_0} \times_{\mathscr{P}(x_0)} G$ carries the quotient topology, and the projection map $\pi \colon \Omega \to M$ is a submersion, i.e. open continuous (because the target map $t$ is a submersion).  Moreover the action of $G$ on the right is continuous (because the composition on the group $G$ is), and the functor $W$ is continuous because composition on the groupoid $\mathscr{P}$ is continuous).
\end{proof}

\begin{remark} As indicated in Cor. \ref{cor:top_smooth} the reconstruction theorem, Thm. \ref{thm:reconstruction} can be stated in other categories of groupoids, e.g., Lie groupoids or, as it will be discussed later, in the category of diffeological (or smooth set) groupoids.   The previous theorem is an extension of a reconstruction theorem proved by Barret \cite{Ba91} that provides a characterization of gauge fields in terms of holonomy maps (see later, Sect. \ref{sec:yang_mills}).
\end{remark}

\begin{remark} Theorem \ref{thm:reconstruction} is also a variation of an imprimitivity theorem for groupoid representations (see \cite{Ib25b} and references therein).   Indeed Mackey's imprimitivity theorem establishes a correspondence between representations of a group $G$ and representations of certain subgroups $K$.   In our case, the isotropy group $\mathscr{P}(x_0)$ is a subgroupoid of the total groupoid $\mathscr{P}$ and the correspondence between the representation $W_0$ of $\mathscr{P}(x_0)$ and the representation $W = W_0\uparrow \mathscr{P}$ is but an instance of a universal ``induction'' procedure. 
\end{remark}

\begin{remark}  The one-to-one correspondence between representations of a connected groupoid $\mathscr{P}$ and those of an isotropy group $\mathscr{P}(x_0)$ indicates that $\mathscr{P}$ and $\mathscr{P}(x_0)$ are Morita equivalent (see, for instance, \cite{La06} and references therein).    Note that the groupoid $\mathscr{P}$ acts on the fundamental normal subgroupoid $\mathscr{P}_0$ of $\mathscr{P}$ on the left while $\mathscr{P}(x_0)$ acts on $\mathscr{P}_0$ on the right asserting the Morita equivalence between both.
\end{remark}


\subsection{Topological gauge fields}\label{sec:top_gauge}  In addition to the example of Feynman's quantum mechanics, Sect. \ref{sec:feynman}, where histories are identified with Feynman's paths which were shown to be functor fields according to Def. \ref{def:quantum field}, we discuss here how gauge fields will also fit into this conceptual framework. In particular we will analize first topological gauge fields leaving the general discussion of general Yang-Mills fields to Sect. \ref{sec:yang_mills}.

Let us consider now a system whose quantum structure is described by a Lie group $G$, for instance a finite or a compact group, which may be thought of as the simplest class of groupoids (together with the unit groupoids) describing quantum systems, Sect. \ref{sec:quantum}\footnote{We may think of a quantum system described by a group as a very simple system with only one output. Then the transitions of the system must necessarily form a group.}.     A spacetime background for the theory will be provided by an experimental setting capable of determining the local structure of a spacetime manifold $\mathscr{M}$ carrying a causal structure encoded in a Lorentzian metric $\eta$.     In this section we will not take into account this additional structure and we will just assume that the auxiliary structure upon which our observations are build is just a smooth manifold $M$.   

We can also make the simplifying assumption that our probing system will be a groupoid $\mathscr{P} \rightrightarrows M$ whose transitions $\alpha\colon  x \to y$, will be determined by the trajectories $\gamma \colon [0,1] \to M$, of test particles moving in $M$, but our detection skills are blurred in such a way that continuous deformations of such trajectories cannot be distinguished, that is we cannot take ``plates'' or actual images of such trajectories, but our detectors are triggered only if a non-continuous change of the parameters describing such trajectory would happen\footnote{Think, for instance on a two-slit experiment like situation, where there is a hole with a solenoid passing through it in a two-dimensional experimental configuration, particles can wander around the hole and our detectors jump provided that the particle turns around the hole.}.    Then, it would be natural to consider that the probing groupoid $\mathscr{P}$ is just the homotopy groupoid $\pi_1 (M) \rightrightarrows M$, that is the morphisms $\alpha \colon x \to y$, $x,y \in M$ are homotopy classes $[\gamma]$ of pahts $\gamma\colon [0,1] \to M$, such that $\gamma (0) = x$, $\gamma(1) = y$.   The composition law is the standard composition of homotopy classes of paths denoted as $[\gamma_1] *[ \gamma_2]$.

According to the analysis in Sect. \ref{sec:probing}, our system will be described by a functor $W \colon \pi_1(M) \to \Gamma$, with $\Gamma$ a groupoid describing the inner quantum structure of the system.  But now, the restriction of the functor $W$ to the isotropy group of $\pi_1(M)$ at the base point $x_0\in M$, will define a group homomorphism $W_{x_0}$ from the  first Poincar\'e group $\pi_1(M,x_0)$ to the corresponding isotropy group of $\Gamma$, that must be the group $G$ we consider describes the quantum system we are probing.
Note that the isotropy group at $x_0$ of the homotopy groupoid $\pi_1(M)$ is just the first homotopy group $\pi_1(M,x_0)$ of $M$ based at $x_0$. If $M$ is connected, the groupoid $\pi_1(M)$ is transitive and all isotropy groups are isomorphic.   Thus, by choosing a reference point $x_0$, the restriction of $W$ to $x_0$, will define a homomorphism of groups $W_{x_0} \colon \pi_1(M,x_0) \to G$.

Moreover, from the proof of Thm. \ref{thm:reconstruction}, such group homomorphism determines a principal $G$-bundle $\Omega$ over $M$ given by $\Omega = \pi_1(M)_{x_0} \times G/\pi_1(M,x_0)$, where $\pi_1(M)_{x_0}$ denotes the set of all homotopy classes of paths $\sigma\colon x_0 \to x$, and $\pi_1(M,x_0)$ acts on the right as 
$$
([\sigma],g)[\gamma] = ([\sigma] *[ \gamma], W_{x_0}([\gamma]^{-1}) g) \, .
$$   
The natural projection $\pi \colon \Omega \to M$ is given by $\pi([\sigma,g]) = \sigma(1) = x$.  The functor $W$ is determined as in Thm. \ref{thm:reconstruction}, and we may conclude this quick analysis saying that a functor field (or history) $W$ of the probing system $\pi_1(M)$ will be a groupoid homomorphism $W \colon \pi_1(M) \to \mathrm{Aut\,}_G(\Omega)$, that is an assignment $[\gamma] \mapsto W([\gamma]) \colon \Omega_x \to \Omega_y$, such that 
\begin{equation}\label{eq:holonomy}
W([\gamma_1] * [\gamma_2]) = W([\gamma_1])\circ W([\gamma_2]) \, .
\end{equation}  

The homomorphism of groups $W_{x_0}$ determines a $G$-principal bundle $\Omega$ over $M$ equipped with a flat connection, up to isomorphism.  Specifically, there is a well-known correspondence: isomorphism classes of flat principal $G$-bundles over a path-connected, paracompact space $M$ are in one-to-one correspondence with conjugacy classes of homomorphisms $H \colon \pi_1(M, x_0) \to G$ (see, e.g., \cite{Ko96}). The homomorphism $W_{x_0}$ obtained from our functor field $W$ is precisely such a map. The functor $W \colon \pi_1(M) \to \mathrm{Aut}_G(M)$ itself represents the holonomy of this flat connection. That is, for any closed path $\gamma$ at $x_0$, the element $W([\gamma]) \in G$ is the group element obtained by parallel transporting along $\gamma$ using the flat connection associated with $W_{x_0}$. The functoriality condition Eq. (\ref{eq:holonomy}), becomes the defining property of holonomy.

Therefore, in this setup, a functor field $W \colon \pi_1(M) \to \mathrm{Aut}_G(M)$ is mathematically equivalent to specifying a flat $G$-connection on $M$. The ``field'' captures the global topological information about how the internal $G$ quantum system (we may also think of the group $G$ as the inner symmetry of the quantum system), twists around noncontractible loops in the manifold $M$. This provides a natural interpretation for topological gauge theories (like $BF$ theory in certain limits, or Chern-Simons theory related concepts) within the fields are functors framework, where the probing system $\pi_1(M)$ explicitly captures the topological obstructions of the base manifold.


\section{The categorical geometry of local field theories}\label{sec:category}

\subsection{Local properties of fields:  Sheaves}\label{sec:sheaves}

After we have established that fields can be understood as functors  defined on a probing system $\mathscr{P}$, we would like to qualify better the properties of such functor as fundamental building blocks of physical theories.      

Locality is a fundamental property of field theories that must be incorporated in our description of fields from the categorical viewpoint.    Let us discuss briefly the appropriate categorical notion of locality we are going to use in our treatment of fields as functors.    The proper notion of locality (and glueing) in categorical terms is provided by the notion of sheaves. 

A presheaf $F$ on the category $\mathsf{C}$ with values in the category $\mathsf{D}$ is a functor  $F \colon \mathsf{C}^\mathrm{opp} \to \mathsf{D}$ (also called a contravariant functor), where $\mathsf{C}^\mathrm{opp}$ denotes the opposite category to $\mathsf{C}$\footnote{The opposite category to a given category is the category with the same objects but morphisms reversed.}, that is $F(\alpha \circ \beta) = F(\beta) F(\alpha)$, for any pair of composable morphisms $\alpha, \beta$ in $\mathsf{C}$, and $F(1_c) = 1_{F(c)}$ for any object $c$ in $\mathsf{C}$.   

A sheaf $F$ is a presheaf satisfying a ``glueing property'' for ``local data''.    The best way to explain what ``local data'' and ``glueing'' means in this context is by discussing a prototypical example of sheaf, the canonical sheaf of a topological space.   Consider $X$ to be a topological space and, associated to it, we have the category $\mathsf{X}$ whose objects are open sets $U \subset X$ and whose morphisms are canonical embeddings $i \colon U \to V$, provided that $U \subset V$.   Thus the set of morphisms $\mathsf{X}(U,V) = \{ i \colon U \to V\}$, if $U \subset V$.  Note that $\mathsf{X}(U,V)$ is empty if $U$ is not a subset of $V$.   We can define the contravariant functor $F \colon \mathsf{X}^{\mathrm{opp}} \to \mathsf{Set}$, given by $F(U) := C(U)$, i.e., the functor that associates to any open set $U$ its commutative algebra of continuous functions on it, and if $i\colon U \to V$ is a morphism, then $F(i) \colon F(V) \to F(U)$, $F(i) f = f\circ i$.  The contravariant character of the functor $F$ reflects the localization property we were referring before, that is, if $U\to V$ is a morphism, then $F(V) \to F(U)$ localises the functor at $U$.  Moreover, the previous functor has an additional glueing property, namely, if $U_i, i \in I$, is a family of open sets covering $U$, that is $\bigcup_{i\in I} U_i = U$, then given continuous functions $f_i \colon U_i \to \mathbb{R}$, such that $f_i \mid_{U_i \cap U_j} = f_j \mid_{U_i \cap U_j}$, for all $i,j\in I$, there exists a continuous function $f\colon U \to \mathbb{R}$ such that $f\mid_{U_i} = f_i$.  In other words $f$ is obtained glueing the consistent local data $f_i$.    A presheaf $F$ on $\mathsf{X}$ satisfying the previous glueing condition is called a sheaf.    

There is a purely categorical description of sheaves beyond the specific examples provided by the canonical sheaves of topological spaces, smooth manifolds, sections of bundles, complex manifolds, algebraic varieties, etc.  In our work we will be concerned with sheaves defined on categories associated to groupoids describing probing systems.    The fundamental notion we need to define a sheaf, that is to understand the notion of ``glueing local data'', is that of \textit{coverings of objects} (that in the category $\mathsf{X}$ where just open coverings by open sets).   

In the category $\mathsf{X}$ associated to a topological space $X$, there are pull-backs, that is if $i \colon U \to V$ is a morphism (i.e., $U \subset V$) and $j \colon W \to V$ is another morphism (i.e., $W \subset V$), then there is an object, denoted $U\times_V W$, which is given by $U\cap W$, and morphisms $i_U \colon U \cap W \to U$ and $i_W \colon U \cap W \to W$, such that $i\circ i_U = j \circ i_W$, and such that they satisfy the universal property depicted in the folowing diagram (\ref{fig:pull-back}).  That is if $i' \colon O \to U$, and $j' \colon O \to W$, are morphisms such that $i\circ i' = j \circ j'$, then there exists a morphism $\nu \colon O \to U\times_V W = U\cap W$, such that 
\begin{equation}
i_U \circ \nu = i' \, , \quad  \mathrm{and\,}  \quad i_W \circ \nu = j' \, .
\end{equation}

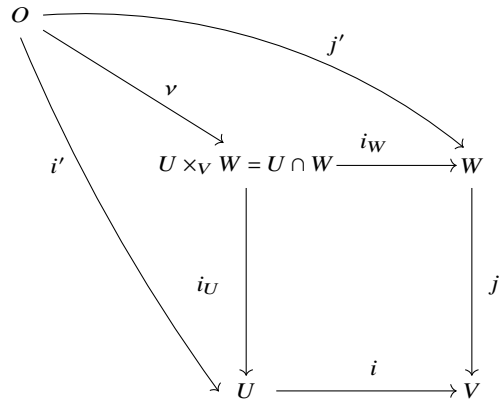
\begin{figure}\begin{center}
\begin{tikzpicture} 
\draw [->] (0.4,0) -- (2.8,0);
\draw [->] (0,2.7) -- (0,0.2);
\draw [->] (3,2.7) -- (3,0.2);
\draw [->] (1.2,3) -- (2.8,3);
\node  at (0,0)   {$U$};
\node at (3,0)   {$V$};
\node  at (0,3) {$U\times_V W = U\cap W$};
\node  at (3,3)   {$W$};
\node  at (1.7,0.3) {$i$};
\node  at (1.7,3.3) {$i_W$};
\node  at (-0.5,1.4) {$i_U$};
\node  at (3.3,1.4) {$j$};
\node  at (-3,5) {$O$};
\node  at (-1,4) {$\nu$};
\node  at (-2.5,3) {$i'$};
\node  at (1.2,4.6) {$j'$};
\draw [->] (-2.7,4.8) -- (-0.3,3.3);
\draw [->] (-2.7,5) arc (94:51:8);
\draw [->] (-3,4.7) arc (22.6:35.9:-23.3);
\end{tikzpicture}
\caption{Universal property defining pull-backs.}\label{fig:pull-back}
\end{center}
\end{figure}

In a general categorical setting the technical notion needed for the definition of a sheaf is a \textit{site}.  A site is a category $\mathsf{C}$ where we have ``coverings'', but if we want to extend the notion of open coverings given above, our category must have pull-backs, that is objects satisfying the universal property that defines the pull-back of open sets described in Fig. \ref{fig:pull-back}.   Then, coverings are neatly described as a set of families of morphisms $\{ \phi_i \colon c_i \to c\}$ in the category $\mathsf{C}$, each one of them called a covering (or ``coverage'' of the object $c$ of $\mathsf{C}$),  satisfying the following axioms \cite{Ar62}.

\begin{definition}\label{def:coverings}
Let $\mathsf{C}$ be a category.  A family of morphisms $\{ \phi_i \colon c_i \to c\}$, for each object $c$ in the category $\mathsf{C}$, are called coverings if they satisfy:
\begin{enumerate}
\item[i.] If $\phi \colon c' \to c$, is an isomorphism in $\mathsf{C}$, then it defines a covering of $c$.
\item[ii.] If $\{ \phi_i \colon c_i \to c\}$, and $\{ \phi_{ij} \colon c_{ij} \to c_i \}$ are coverings (of $c$ and $c_i$, respectively), then the family $\{ \phi_i \circ \phi_{ij} \colon c_{ij} \to c\}$ is also a covering of $c$.
\item[iii.] If $\{\phi_i \colon c_i \to c \}$ is a covering of $c$ and $\psi \colon d \to c$ is an arbitrary morphism, then the family of pull-back morphisms $\phi_i\times_c\psi \colon c_i\times_c d \to d$, is a covering of $d$ (see Fig. \ref{fig:coverages}).
\end{enumerate}
\end{definition}

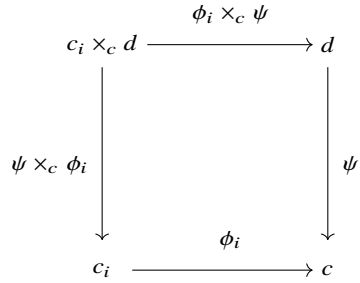
\begin{figure}\begin{center}
\begin{tikzpicture} 
\draw [->] (0.4,0) -- (2.8,0);
\draw [->] (0,2.7) -- (0,0.4);
\draw [->] (3,2.7) -- (3,0.4);
\draw [->] (0.6,3) -- (2.8,3);
\node  at (0,0)   {$c_i$};
\node at (3,0)   {$c$};
\node  at (0,3) {$c_i \times_c d$};
\node  at (3,3)   {$d$};
\node  at (1.7,0.4) {$\phi_i$};
\node  at (1.7,3.4) {$\phi_i\times_c \psi$};
\node  at (-0.7,1.4) {$\psi\times_c\phi_i$};
\node  at (3.3,1.4) {$\psi$};
\end{tikzpicture}
\caption{The pull-back condition, condition (iii) defining coverings.}\label{fig:coverages}
\end{center}
\end{figure}

Clearly the family of open coverings in the category $\mathsf{X}$ associated to a topological space $X$ define coverings in the previous sense as well as the families of open coverings in smooth manifolds, algebraic varieties, etc.  In other words the category $\mathsf{X}$ is a site whenever $X$ is a topological space, smooth manifold, etc.   

The family of functors $F \colon \mathsf{C} \to \mathsf{D}$ defined on a category $\mathsf{C}$ and taking values on the category $\mathsf{D}$ form a category, denoted $[\mathsf{C}, \mathsf{D}]$ whose objects are the functors themselves and the morphisms are natural transformations $\eta \colon F \Longrightarrow G$ (see later, Sect. \ref{sec:gauge_invariance}).   The category of presheaves $F \colon \mathsf{C}^\mathrm{opp} \to \mathsf{Set}$, will be denoted by $\mathrm{PrSh}(\mathsf{C})$, and, finally, the category of sheaves on the site $\mathsf{C}$ will be denoted by $\mathrm{Sh}(\mathsf{C})$.       A category of sheaves $\mathrm{Sh}(\mathsf{C})$ is called a Groethendieck topos (or just a topos) and they have a rich categorical structure \cite{Jo02}.   The category of sheaves on a category like $\mathsf{X}$ is called a \textit{petit} topos, in contrast to large categories of sheaves like the category of sheaves on the site $\mathsf{Eucl}$ whose objects are open sets in Euclidean spaces $\mathbb{R}^n$, $n \in \mathbb{N}$, and morphisms smooth maps $\varphi \colon U \to V$,  which are called \textit{gros} topos.  A sheaf $F \colon \mathsf{Eucl} \to \mathsf{Set}$ is called a smooth set and the topos of smooth sets (and the category of diffeological spaces which is a full subcategory of the topos of smooth sets) constitute the natural setting to discuss the geometry of spaces of fields (see \cite{Sc24,Gi23,Ib25a} and references therein).
 
The structure of groupoids provides a natural notion of coverings.   Thus given a groupoid $\mathscr{P}\rightrightarrows M$, we will consider the category $\mathsf{P}$, whose objects are subgroupoids $\mathscr{K}\subset \mathscr{P}$, and whose morphisms are the natural inclusions $\mathscr{H} \subset \mathscr{K}$.   

Given a groupoid $\mathscr{P}$, there is a natural family of subgroupoids $\mathscr{P}_U \subset \mathscr{P}$, where $U \subset M$ is a subset, and $\mathscr{P}_U \rightrightarrows U$ denotes the restriction of the groupoid $\mathscr{P}$ to $U$, in other words,
\begin{equation}\label{eq:restriction}
\mathscr{P}_U = \{ \alpha \colon x \to y \in \mathscr{P} \mid x,y \in U \} \, .
\end{equation}

\begin{definition}\label{def:locally}
We will say that a groupoid $\mathscr{P}\rightrightarrows M$ is locally generated if given any subset $U \subset M$, there is a covering $U_i$, $i\in I$ of $U$, $U = \cup_{i\in I} U_i$, such that the groupoid $\mathscr{P}_U$ is generated by the subgroupoids $\mathscr{P}_{U_i}$ associated to the covering $U_i$ of $U$.
\end{definition}

Note that there is no reference to a topological struture on $M$.   If $M$ were a topological space (or a smooth manifold) we may consider open subsets and open coverings.  In such case,  it is easy to check that if $M$ is compact, the groupoid $\mathscr{P}$ is locally generated.  
 
 \begin{definition}\label{def:groupoid_coverings}
Given a groupoid $\mathscr{H} \rightrightarrows U$ over $U \subset M$, which is an object in the category $\mathsf{P}$, that is, a subgroupoid of $\mathscr{P}$, a covering of $\mathscr{H}$ will be a family of subgroupoids $\mathscr{H}_{U_i} \subset \mathscr{H}$ associated to a covering $U_i$, $i \in I$ of $U$ by sets $U_i \subset U$, $i\in I$, such that $\mathscr{H}_{U_i}$ generates $\mathscr{H}$.  
\end{definition}

\begin{lemma}\label{lem:site}
    In the category $\mathsf{P}$ associated to a locally generated groupoid $\mathscr{P}$ there are coverings, i.e., the category $\mathsf{P}$ is a site.
\end{lemma}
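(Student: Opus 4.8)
The plan is to treat $\mathsf{P}$ as the poset of subgroupoids of $\mathscr{P}$ ordered by inclusion and to verify, for the families singled out in Definition~\ref{def:groupoid_coverings}, the three axioms of Definition~\ref{def:coverings}. The first step is to record the categorical structure of $\mathsf{P}$: since there is at most one morphism $\mathscr{H}\to\mathscr{K}$ (the inclusion, when $\mathscr{H}\subseteq\mathscr{K}$), the category $\mathsf{P}$ is a preorder, and every cospan $\mathscr{H}_1\hookrightarrow\mathscr{K}\hookleftarrow\mathscr{H}_2$ has a pullback given by the intersection $\mathscr{H}_1\cap\mathscr{H}_2$. Indeed a subgroupoid maps into both $\mathscr{H}_1$ and $\mathscr{H}_2$ exactly when it is contained in $\mathscr{H}_1\cap\mathscr{H}_2$, which is the universal property of Fig.~\ref{fig:pull-back}; and the intersection of subgroupoids is again a subgroupoid, with object space the intersection of the two object spaces. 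This secures the pullbacks needed even to state axiom (iii).

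Axiom (i) is immediate: in a preorder an isomorphism is an identity, and the one-element family attached to the trivial cover $\{U\}$ of the object space $U$ of an object $\mathscr{K}$ has $\mathscr{K}_U=\mathscr{K}$, which generates $\mathscr{K}$ tautologically. For axiom (ii) I would combine two transitivity properties. If $\{\mathscr{H}_{U_i}\}$ arises from a cover $U=\bigcup_i U_i$ and each $\{(\mathscr{H}_{U_i})_{U_{ij}}\}$ from a cover $U_i=\bigcup_j U_{ij}$, then restriction is transitive, so by Eq.~\ref{eq:restriction} one has $(\mathscr{H}_{U_i})_{U_{ij}}=\mathscr{H}_{U_{ij}}$ (using $U_{ij}\subseteq U_i$); the family $\{U_{ij}\}_{i,j}$ still covers $U$; and generation is transitive, since the subgroupoid generated by all the $\mathscr{H}_{U_{ij}}$ contains each $\mathscr{H}_{U_i}$ and hence contains $\mathscr{H}$. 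Thus $\{\mathscr{H}_{U_{ij}}\hookrightarrow\mathscr{H}\}$ is again a covering.

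The substance, and the step I expect to be the main obstacle, is axiom (iii). Given a covering $\{\mathscr{H}_{U_i}\hookrightarrow\mathscr{H}\}$ and an arbitrary morphism $\mathscr{K}\hookrightarrow\mathscr{H}$ with $\mathscr{K}$ over $V\subseteq U$, the pullback computed above is $\mathscr{H}_{U_i}\cap\mathscr{K}=\mathscr{K}_{V_i}$ with $V_i:=U_i\cap V$, and $\{V_i\}$ covers $V$ because $V\subseteq U=\bigcup_i U_i$. What remains is the generation clause of Definition~\ref{def:groupoid_coverings}: that $\{\mathscr{K}_{V_i}\}$ generates $\mathscr{K}$. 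This is precisely the delicate point, because it does \emph{not} follow from the fact that $\{\mathscr{H}_{U_i}\}$ generates $\mathscr{H}$: a factorisation of a morphism of $\mathscr{K}$ into pieces of the $\mathscr{H}_{U_i}$ may leave $\mathscr{K}$. Already for a pair groupoid one sees that a cover of $U$ whose overlaps are connected can restrict to a cover of $V$ whose overlaps are not, so that the restricted pieces generate only the units. The way I would close the argument is to invoke the hypothesis that $\mathscr{P}$ is locally generated (Definition~\ref{def:locally}) in a form available for the subgroupoid $\mathscr{K}$ over $V$, producing a cover of $V$ whose restrictions do generate $\mathscr{K}$, and then reconciling it with the induced cover $\{V_i\}$ through a common refinement using the already-proved axiom (ii). Making this inheritance of local generation by subgroupoids precise — or, equivalently, reading Definition~\ref{def:locally} as a ``for every cover'' rather than a ``for some cover'' condition — is the crux on which the lemma really depends, and is where I would concentrate the care.
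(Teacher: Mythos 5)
Your handling of axioms (i) and (ii) and your identification of the pullback in $\mathsf{P}$ with the intersection of subgroupoids match the paper exactly (it declares (i) and (ii) trivial and spends all its effort on (iii)). Where you stop, however, the paper deploys a device absent from your proposal: given $\alpha\colon x\to y$ in $\mathscr{K}\subset\mathscr{H}$ and a factorization $\alpha=\alpha_r\circ\cdots\circ\alpha_1$ with $\alpha_k\in\mathscr{H}_{U_{i_k}}$, it does \emph{not} try to push the factors $\alpha_k$ into $\mathscr{K}$. Instead it re-factors $\alpha$ entirely inside $\mathscr{K}$ by inserting $\sigma_k\circ\sigma_k^{-1}$, where $\sigma_k\colon x_k\to y$ is a morphism \emph{of} $\mathscr{K}$ with $x_k\in U_{i_k}$ (and, for the next step, in the overlap with $U_{i_{k+1}}$); the new factors $\alpha_k'=\sigma_k^{-1}\circ\sigma_{k-1}$ are then automatically morphisms of $\mathscr{K}$ lying in $\mathscr{K}_{U_{i_k}}$. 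This shifts the entire burden onto the existence of the connecting morphisms $\sigma_k$ in $\mathscr{K}$.

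That existence is asserted, not proved, and it is precisely the obstruction you name: for an arbitrary subgroupoid $\mathscr{K}$ there need be no morphism of $\mathscr{K}$ from a point of $U_{i_k}$ to $y$. Your pair-groupoid example makes this concrete: take $\mathscr{H}$ the pair groupoid of $\{1,2,3\}$, the cover $U_1=\{1,2\}$, $U_2=\{2,3\}$ (which does generate $\mathscr{H}$), and $\mathscr{K}$ the pair groupoid of $\{1,3\}$; then $\mathscr{K}\cap\mathscr{H}_{U_i}$ contains only units and cannot generate $\mathscr{K}$, so axiom (iii) fails as stated. Thus your diagnosis of the weak point is accurate, but your proposal does not close it either: the repair you sketch (a common refinement, glued by axiom (ii), with a cover adapted to $\mathscr{K}$ by local generation) cannot work as written, since refining a cover only shrinks the restricted subgroupoids---showing that the refinement generates $\mathscr{K}$ is the same problem over again---and the cover supplied by Definition \ref{def:locally} need not be comparable to the pulled-back one. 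Note also that Definition \ref{def:locally}, read literally as a ``for some cover'' condition, is satisfied by every groupoid via the trivial cover, so it cannot be the hypothesis doing the work. A complete argument requires strengthening the setup---restricting the objects and coverings of $\mathsf{P}$, or reading Definition \ref{def:locally} as a condition on every subgroupoid and every induced cover---and neither your argument nor the paper's supplies this; the paper's proof simply assumes the $\sigma_k$ exist at exactly the point where your proof stops.
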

 
 \begin{proof}
     
 Now it is easy to check the properties defining coverings, Def. (\ref{def:coverings}), for the families of subgroupoids $\mathscr{H}_{U_i}$ defining coverings in the category $\mathsf{P}$.  

 Properties 1 and 2 are trivial.  Property 3 in Def. \ref{def:coverings}, follow easily realizing that the pullback in the category $\mathsf{P}$ is given by intersections of groupoids and that if $\mathscr{K} \subset \mathscr{H}$ is a subgroupoid of $\mathscr{H}\rightrightarrows U$, and $U_i$ is a covering of $U$ such that the subgroupoids $\mathscr{H}_{U_i}$ generate $\mathscr{H}$, then the groupoids $\mathscr{H}_{U_i}\cap \mathscr{K}$ generates $\mathscr{K}$. Indeed, notice that $\mathscr{H}_{U_i}\cap \mathscr{K} = \mathscr{K}_{U_i}$, and if $\alpha \colon x \to y \in \mathscr{K} \subset \mathscr{H}$, then, $\alpha = \alpha_r \circ \cdots \circ \alpha_1$, where each factor $\alpha_k$ belongs to some $\mathscr{H}_{U_{i_k}}$, $k = 1, \ldots, r$.  Then, consider the factorization 
 $$
 \alpha =   \sigma_1 \circ \sigma_1^{-1} \circ  \alpha  \,,
 $$ 
 where $\sigma_1 \colon x_1 \to y  \in \mathscr{K}$, with $x_1 \in U_{i_1}$. Then,  
 $$
 \alpha_1' =  \sigma_1^{-1} \circ\alpha \colon x \to x_1 \in \mathscr{K}_{U_1} \, ,
 $$  
 and we get:
 $$
 \alpha = \sigma_1 \circ  \alpha_1'  \, .
 $$
 Repeating the process for $\sigma_1 \colon x_1 \to y$, we get $\sigma_1 = \sigma_2  \circ \sigma_2^{-1} \circ  \sigma_1 $, with $\sigma_2 \colon x_2 \to y \in \mathscr{K}$, $x_2 \in U_{i_2}$. Thus, $\alpha_2' = \sigma_2^{-1} \circ \sigma_1  \colon x_1 \to x_2 \in \mathscr{K}_{U_{i_2}}$ and 
 $$
 \alpha = \sigma_2  \circ \alpha_2' \circ \alpha_1' \, , 
 $$ 
 with $\alpha'_1 \in \mathscr{K}_{U_1},\alpha_2' \in \mathscr{K}_{U_2}$.   Iterating the process we obtain that $\alpha = \alpha_1' \circ \cdots \circ \alpha_r'$, with $\alpha_k' \in \mathscr{K}_{U_{i_k}}$.
 \end{proof}  

Thus, in this context, if $W \colon \mathscr{P} \to \Gamma$ is a functor field defined on a locally generated groupoid, it defines a groupoid homomorphism $W_\mathscr{H} := W\mid_{\mathscr{H}} \colon \mathscr{H} \to \Gamma$ for any $\mathscr{H} \subset \mathscr{P}$, and conversely, if $W_i \colon \mathscr{H}_i\to \Gamma$ is a family of groupoid homomorphisms defined on a covering $\mathscr{H}_i$ of $\mathscr{H}$, such that they coincide on intersections, that is, $W_i \mid_{\mathscr{H}_i \cap \mathscr{H}_j} = W_j \mid_{\mathscr{H}_i \cap \mathscr{H}_j}$, then there is groupoid homomorphism $W_\mathscr{H} \colon \mathscr{H} \to \Gamma$, given by $W_{\mathscr{H}}(\alpha) = W_{i_r}(\alpha_r ) \circ \cdots \circ W_{i_1}(\alpha_1)$, where $\alpha = \alpha_r \circ \cdots \circ \alpha_1$, with $\alpha_i \in \mathscr{H}_i$.  Notice that the restriction of $W_\mathscr{H}$ to $\mathscr{H}_i$ is $W_i$.

We can restate the previous discussion by saying that there is a contravariant functor $[-,\Gamma]$ defined in the category $\mathsf{P}$ with values in the category of groupoid homomorphisms with values in $\Gamma$, that is the functor assigns to each $\mathscr{H} \subset \mathscr{P}$, the category of groupoid homomorphisms $[\mathscr{H}, \Gamma]$.  Moreover, to each inclusion $i \colon \mathscr{K} \to \mathscr{H}$, it assigns the restriction $W \colon \mathscr{H} \to \Gamma \in [\mathscr{H}, \Gamma] \mapsto W\mid_{\mathscr{K}} = W\circ i \colon \mathscr{K} \to \Gamma \in [\mathscr{K}, \Gamma]$,  and this assignment is contravariant.

Thus, as observed above, the contravariant functor $[-,\Gamma]$ is a sheaf on the category $\mathsf{P}$. The sheaf property of the functor $[-,\Gamma]$, captures the locality and glueing properties of fields.

As discussed at the beginning of this section, coverings determine Groethendiek topologies \cite{Ar62} and a category with a Groethendiek topology is called a site.    Thus we can refine the previous definition of functor fields as quantum histories, Def. \ref{def:quantum field}, by making explicit a locality property.

\begin{definition}\label{def:local_field}
Given an inner quantum structure provided by a groupoid $\Gamma \rightrightarrows \Omega$ and a probing system described by the groupoid $\mathscr{P} \rightrightarrows M$, local quantum fields $W$ are elements of the sheaf $[-,\Gamma]$ defined on the site $\mathsf{P}$ associated to the probing groupoid $\mathscr{P}$. 
\end{definition}

There is a one-to-one correspondence between functor fields $W \colon \mathscr{P} \to \Gamma$ and global elements of the sheaf $[-,\Gamma]$.


\subsection{The category of local fields determined by a probing system: gauge invariance}\label{sec:gauge_invariance}

Given two categories $\mathsf{C}, \mathsf{D}$, the 
family of functors $F \colon \mathsf{C} \to \mathsf{D}$, also denoted as $[\mathsf{C}, \mathsf{D}]$, is a category whose objects are the functors $F$ themselves and whose morphisms are natural transformations $\varphi \colon F \Longrightarrow F'$, that is $\varphi$ assigns to any object $c \in \mathsf{C}$, a morphism $\varphi (x) \colon F(c) \to F'(c)$ in $\mathsf{D}$, such that 
\begin{equation}\label{eq:natural}
F'(\alpha) \circ \varphi (x) = \varphi (y) \circ F(\alpha) \, , \qquad \forall \alpha \colon x \to y \in \mathsf{C}(x,y) \, .
\end{equation}
 Our main observation here is that this notion captures the deep structure of gauge invariance in field theories.

Indeed, consider a probing system determined by the groupoid $\mathscr{P} \rightrightarrows M$, then functor fields are groupoid homomorphisms $W \colon \mathscr{P} \to \Gamma$, where $\Gamma \rightrightarrows \Omega$, describes a quantum system.   We can broaden our perspective and consider the class of all functors $W \colon \mathscr{P} \to \mathsf{Set}$, inside of which lies the family of functors $W \colon \mathscr{P} \to \Gamma$.   In fact the functors $[\mathscr{P}, \mathsf{Set}]$ form a category as commented above, and the subcategory of functors with codomain $\Gamma$ is often called the slice category of $[\mathscr{P}, \mathsf{Set}]$ at $\Gamma$.   

A natural transformation $\varphi \colon W \Longrightarrow W'$, will also be called an intertwiner between $W$ and $W'$ thought of as representations of the groupoid $\mathscr{P}$, Sect. \ref{sec:representations}.  In fact, two representations $R,R'$ of a given groupoid are equivalent if there is a invertible natural transformation $\varphi \colon R \Longrightarrow R'$.

If the natural transformation $\varphi \colon W \Longrightarrow W'$ is invertible we will say that the two functor fields are gauge equivalent and we will also call the natural transformation $\varphi$ a ``gauge transformation''.   Thus, two gauge equivalent functor fields describe equivalent representations of the probing system, and they may be considered equivalent from the perspective of the given probing system\footnote{But not necessarilly  physically equivalent in the sense of providing the same description of the theory.}.
Some examples will help clarify the previous notions.

\begin{example}
Consider Feynman's quantum mechanics as discussed before, Sect. \ref{sec:feynman}.  Fields are functors $W \colon P(\mathbb{R}) \to P(Q)$, then, a natural transformation $\varphi \colon W \Longrightarrow W'$, will assign to each $t\in \mathbb{R}$, a pair $\varphi(t) \colon \gamma (t) \to \gamma' (t)$, such that, Eq. (\ref{eq:natural}):
$$
\varphi (s) \circ W (s,t) = W' (s,t) \circ \varphi (t) \, ,
$$ 
which is identically satisfied.   Then, from the perspective of the clock considered as a probing system all paths are ``gauge equivalent'', i.e., they provide equivalent representations of the groupoid $P(\mathbb{R})$, whovever, different paths are not, in general, physically equivalent.  That will depend on the dynamical content of the theory provided, for instance by a Lagrangian function.   Thus, if the dynamics is described by a mechanical Lagrangian $L \colon TQ \to \mathbb{R}$, $L (q,v) = \frac{1}{2} \langle v,v\rangle_q - V(q)$, with $\langle \cdot,\cdots \rangle_q$ a metric on $Q$, then, the only natural transformations preserving the dynamics will correspond to isometries of the metric locally defined around a given path $q(t)$.   On the other hand if our Lagrangian is first order, for instance $L(q,v) = A_q(v)$, where $A = A_i dq^i$ is a 1-form on $Q$, then the equations of motion become first order equations on $Q$ of the form $\dot{x} \lrcorner F_A = 0$, with $F_A = dA$. Then, a natural transformations of a path $q(t)$ preserving the dynamics will correspond to a diffeomorphism $\varphi$ locally defined around $q(t)$ such that $\varphi^* A = A + d\chi$, i.e., to a (local)  standard gauge transformation of $A$ thought as a vector potential.
\end{example}

\begin{example}
There is, however, a situation where the notion of equivalence of representations adquires a deep physical sense.  If we consider topological gauge field theories as discussed in Sect. \ref{sec:top_gauge}, then two representations, or holonomy maps, $W,W'\colon \pi_1(M) \to \mathrm{Aut}_G(M)$ of the groupoid $\pi_1(M)$, define two principal $G$-bundles $P,P'$ respectively, Thm. \ref{thm:reconstruction}.  A natural transformation $\varphi \colon W \Longrightarrow W'$, is the same as an intertwiner for the representations $W$ and $W'$, that is $x \mapsto \varphi_x := \varphi (x)\colon P_x \to P_x'$, such that, Eq. (\ref{eq:natural}):
$$
W'([\gamma]) \circ \varphi(x) = \varphi(y) \circ W([\gamma])\, , \qquad \forall \gamma \colon x \to y\, ,   \quad x,y\in M \, .
$$  
If $\varphi$ is invertible, the two bundles are isomorphic (and the flat connection $A,A'$ defined by each one of them are gauge equivalent in the standard sense).  

Notice, that the equivalence of the representations $W$, $W'$, imply the equivalence of the representations $W_{x_0}$ and $W_{x_0}'$ of the isotropy group $\pi_1(M, x_0)$.   Thus we may think of the space of holonomy maps $W$ on the space $M$ as a category whose objects are the representations themselves and whose morphisms are natural transformations. The canonical subgroupoid of this category will have as morphisms invertible natural transformations, i.e., gauge transformations.  The space of orbits of this groupoid is the moduli space of gauge equivalence classes of flat connections on principal $G$-bundles over $M$ which is the same as the quotient space $\mathrm{Hom\,}(\pi_1(M,x_0), G)/G$.
\end{example}

\subsection{Yang-Mills fields}\label{sec:yang_mills}

 Probing a system described by a pincipal $G$ bundle $\pi \colon \Omega \to M$, over $M$ (or, better its associated groupoid $\mathrm{Aut}_G(\pi)$) can be done consistently by using ``test particles''.   In Sect. \ref{sec:top_gauge}, the chosen probing system was the groupoid $\mathscr{P} = \pi_1(M)$ where the trajectories of the particles were completely blurred and only their topological properties were observed.   If we improve the quality of our detectors and assume that the actual trajectories of the particles can be observed (if we wished so), then we will be lead to considering the space of curves $\gamma \colon I \to M$, where $I \subset \mathbb{R}$ is an interval of time.   
 
 However, it is not possible to perform continuous measurements on a quantum system and preserve its quantum properties.  The Zeno effect prevents that (see, e.g., \cite{Fa08}).   Thus, in a genuine quantum environment, we should not consider parametrised curves but just the traces of the trajectories of ``test particles''.  As it turns out the most reasonable class of curves we can consider are thin homotopy equivalence classes of curves \cite{Ba91,Sc09}.  A thin homotopy between to curves $\gamma \colon [t_0,t_1]\to M$, $\gamma' \colon [t_0',t_1']\to M$, is a smooth homotopy $h \colon D \subset \mathbb{R}^2 \to M$, $h(s, \cdot)$, $s \in [0,1]$, such it that transforms  the map $\gamma (t)$ in the interval $[t_0,t_1]$,  in the map $\gamma'(t)$  in the interval $[t_0',t_1']$, and such that the rank of the differential map $Dh$ is smaller or equal than 1 (in other words, such that the deformation $\partial h /\partial s$ and the time derivative $\partial h /\partial t$ are parallel).   It is clear that thin homotopy classes of smooth curves include reparametrizations of the time parameter.   As it happens the quotient of the space of smooth paths $\Omega (M)$ by the thin homotopy equivalence relation $\sim_{\mathrm{thin}}$, can be equipped with a canonical groupoid structure.  Let us denote such quotient by $\mathscr{P}(M) = \Omega(M) / \sim_{\mathrm{thin}}$.   The composition law is defined as $[\gamma]*[\gamma'] = [\gamma \circ \gamma']$, where $\gamma \circ \gamma'$ denotes concatenation of curves and $[\gamma]$ denotes now the thin homotopy class of $\gamma$.  The source and target maps $s,t \colon \mathscr{P}(M) \to M$ are given by $s([\gamma]) = \gamma (0)$, $t([\gamma]) = \gamma (1)$ with $\gamma \colon [0,1] \to M$ a representative in the equivalence class $[\gamma]$.  
 
 The groupoid $\mathscr{P}(M) \rightrightarrows M$, carries a natural diffeological structure (it is indeed a concrete smooth space, see \cite{Ib25a} and references therein) and it makes sense to say that a map defined on $\mathscr{P}(M)$ is smooth.  A functor field on $W$ on $\mathscr{P}(M)$ will be a smooth functor $W \colon \mathscr{P}(M) \to \mathrm{Aut\,}_G(\pi_P)$, for some principal $G$-bundle $\pi_P \colon P \to M$.  The isotropy group of $\mathscr{P}(M)$ at $x_0\in M$ will be denoted by $\mathscr{P}(M,x_0)$ and it consists of all  thin homotopy classes of loops based at $x_0$.   Note that if the manifold $M$ is connected, the groupoid $\mathscr{P}(M)$ is transitive and all its isotropy groups are isomorphic.

 The restriction of a field $W$ on $\mathscr{P}(M)$ to the isotropy group $\mathscr{P}(M,x_0)$ defines a group homomorphism $W_{x_0} \colon \mathscr{P}(M,x_0) \to G$ (once a point $p \in P_{x_0}$ along the fibre of $x_0$ has been chosen).  Such homomorphism is smooth in the diffeological sense, what is equivalent to the technical condition H3 in the theorem by Barret \cite{Ba91} that establishes a one-to-one correspondence between group homomorphisms $H \colon \mathscr{P}(M,x_0) \to G$, satisfying a technical condition (that is exactly the condition stating that the homomorphism $H$ is smooth in the diffeological sense), and principal connections on a principal $G$-bundle with a marked point $p$ above $x_0$.   Barret's theorem provides the sought relation between holonomy maps associated to parallel transport and gauge fields represented by principal connections $A$ on principal bundles.   The group homomorphism $H$ is a holonomy map and the relation between the homomorphism $H$ and the principal connection $A$ associated to it is given by:
 \begin{equation}
 H([\gamma]) = P \exp \int_\gamma A \, ,
 \end{equation}
 where $P (\cdot)$ denotes the time-ordered exponential.

 Moreover, as a consequence of Thm. \ref{thm:reconstruction}, each homomorphism $H \colon \mathscr{P}(M,x_0) \to G$ determines a representation $W \colon \mathscr{P}(M) \to \mathrm{Aut\,}_G(\pi)$.  The bundle obtained by the reconstruction theorem Thm. \ref{thm:reconstruction}, coincides with the principal bundle obtained in Barret's construction, and the functor $W$ represents the parallel transport defining the connection $A$.  

 We may observe that the groupoid $\mathscr{P}(M)$ is generated locally, Def. \ref{def:locally}.   Indeed, if $U_i$ denotes a partition of $M$ by open subsets, and $\gamma \colon x \to y$ a path, then because of compactiy we can extract a finite family of open sets $U_{i_k}$, $k = 1, \ldots, r$, covering the graph of $\gamma$. Then we can decompose 
 $$
 [\gamma] = [\gamma_{i_r}] \circ \cdots \circ [\gamma_{i_1}] \, ,
 $$
where each of the paths $\gamma_{i_k}$ lies in the open subset $U_{i_k}$.   Then, the category of subgroupoids of the groupoid of paths $\mathscr{P}(M)$ is a site, Lemma \ref{lem:site}, and functor fields defined on it, i.e., holonomy maps, have nice local properties\footnote{We already know this because connection forms $A$ can be localized and reconstructed out of local data, but it is instructive to see that such property in encapsulated at a much deeper level.}. 

Finally, we observe that gauge transformations $\eta \colon W \Longrightarrow W'$ are exactly the same as the standard notion of gauge transformation among connections.   In fact, the natural transformation $\eta$ assigns an invertible equivariant map $\eta (x) \colon P_x \to P_x$ such that, Eq. (\ref{eq:natural}):
$$
\eta(y) \circ W([\gamma])= W'([\gamma])\circ \eta (x) \, ,
$$ 
for each path $\gamma \colon x \to y$, but this means that the parallel transports associated to $W$ and $W'$ are equivalent, or that the connection $A'$ is obtained from $A$ by the gauge transformation $\eta$.

The dynamical aspects and the quantum analysis of the theory of Yang-Mills fields from the perspective presented here will be discussed elsewhere.   It suffices to point out here the relevant connections emerging from the previous considerations and recent work on the subject (see, e.g., \cite{Sa24}, and references therein).


\section{Higher geometry of field theories}\label{sec:higher}

In Sect. \ref{sec:sheaves}, Def. \ref{def:local_field}, local fields were defined as sheaves, that is, as functors satisfying natural locality properties (localization and glueing), however the property of glueing plays another relevant role as it allows to define a natural ``horizontal'' composition of quantum histories.   This horizontal composition of histories lifts to the category of functor fields on a given probing system, making it into a 2-category.  We will discuss these ideas in what follows.

\subsection{The horizontal composition of fields}

We will begin by recalling first the composition of histories on a one-dimensional quantum field theory, i.e., in Feynman's quantum mechanics, Sect. \ref{sec:feynman}, Eq. (\ref{eq:composition_feynman}).   In that setting histories (or fields) are just functors $W \colon  P(\mathbb{R})  \to \Gamma$, with $\Gamma$ the inner groupoid of the theory.   Now, we will consider the restrictions of a history $W$ to a given subinterval of time $[t_0,t_1]$, i.e., the restriction of $W$ to the subgroupoid $P[t_1,t_0] = [t_0,t_1] \times [t_0,t_1]$.   We will denote the restriction as $W_{t_1,t_0}$ (or just $W$ if the context is unambiguous).    Note that the history $W_{t_1,t_0} \colon [t_0,t_1] \times [t_0,t_1] \to \Gamma$ is completely described in terms of a single map 
$$
w_{t_1,t_0} \colon [t_0,t_1] \to \Gamma \, ,
$$ 
defined as $w_{t_1,t_0} (s) = W(s,t_0)$.   In fact, 
$$
W(s,t) = W(s,t_0) \circ W(t,t_0)^{-1} = w_{t_1,t_0}(s) \circ w_{t_1,t_0}(t)^{-1} \, ,
$$ 
for all $t,s \in [t_0,t_1]$.   Now, given two histories $W_{t_1,t_0}$ and $W'_{t_2,t_1}$ defined respectively on the groupoids $P[t_1,t_0] = [t_0,t_1] \times [t_0,t_1]$ and $P[t_2,t_1]= [t_1,t_2] \times [t_1,t_2]$, we can ``glue'' them together as a new history $W'_{t_2,t_1}\circ W_{t_1,t_0}$ defined on the groupoid $P[t_2,t_0] = [t_0,t_2] \times [t_0,t_2]$, and defined by Eq. (\ref{eq:composition_feynman}), or equivalently as the history determined by the function $\tilde{w}_{t_2,t_1} \colon [t_0,t_2] \to \Gamma$, defined as:  
\begin{equation}\label{eq:horizontal1}
\tilde{w}_{t_2,t_0}(s) = \left\{ \begin{array}{ll} 
w_{t_1,t_0}(s), & \mathrm{if\,} t_0 \leq s \leq t_1 \\ w'_{t_2,t_1}(s) \circ w_{t_1,t_0}(t_1) \,, & \mathrm{if\,} t_1 \leq s \leq t_2 
\end{array} \right.
\, .
\end{equation}
Namely, we define the history $W'_{t_2,t_1}\circ W_{t_1,t_0}$ as 
$$
W'_{t_2,t_1}\circ W_{t_1,t_0}(s,t) =  \tilde{w}_{t_2,t_0}(s) \circ \tilde{w}_{t_2,t_0}(t)^{-1} \, , \quad \forall s,t \in [t_0,t_2] \, .
$$ 
It is easy to check that this composition law is associative and has units.   The unit $1_{(x_0,t_0)}$ is the history defined in the interval $[t_0,t_0] = \{ t_0 \}$ by the map $w(t_0) = 1_{x_0}$, with $1_{x_0}$ the unit at $x_0$ in the groupoid $\Gamma$.    

The family of all histories with values in the groupoid $\Gamma\rightrightarrows \Omega$ form a category $\mathscr{C}(\Gamma) \rightrightarrows \Omega \times \mathbb{R}$.    This category, and its groupoidification, were used in \cite{Ci24} to provide a rigorous description of the Feynman formalism in the groupoid picture of quantum mechanics.   In that situation the inner groupoid $\Gamma$ was assumed to be a measure groupoid and the histories, were assumed to be defined by measurable functions.  Oher choices can be made depending on the nature of the groupoid $\Gamma$.   For instance, in Feynman's standard presentation of the path integral description of quantum mechanics, the histories were defined by continuous paths $\gamma \colon [t_0,t_1] \to \mathbb{R}^3$.   

The previous ideas can be extended to include field theories described on a spacetime $\mathscr{M}$.  Thus, we will assume that $\mathscr{M}$ denotes a $m = (1+d)$--dimensional globally hyperbolic spacetime.  We want to describe a quantum system given by the inner groupoid $\Gamma\rightrightarrows \Omega$ using probing systems $\mathscr{P} \rightrightarrows \mathscr{M}$ over the spacetime $\mathscr{M}$.  We will consider the simplest non-trivial probing system defined by the groupoid of pairs $P(\mathscr{M}) = \mathscr{M} \times \mathscr{M} \rightrightarrows \mathscr{M}$, of $\mathscr{M}$\footnote{Note that it coincides with the groupoid $\pi_1(\mathscr{M})$ if $\mathscr{M}$ is simply connected.}.  Then a history or functor field $W$ will be given by a functor $W \colon P(\mathscr{M}) \to \Gamma$. The section of the bundle $\pi \colon \Omega \to \mathscr{M}$, defined by $W$ will be denotes as usual by $\phi\colon \mathscr{M} \to \Omega$, $x \in \mathscr{M} \mapsto \phi(x) \in \Omega$.    

We can localize $W$ to any (open) subset $O\subset \mathscr{M}$.   However we will be interested in localizing our histories to ``blocks'' $\mathscr{M}_{21} \subset \mathscr{M}$, determined by two Cauchy hypersurfaces $\Sigma_1$, $\Sigma_2$ in $\mathscr{M}$.   More precisely, consider two Cauchy hypersurfaces $\Sigma_1$, $\Sigma_2$ in $\mathscr{M}$ such that $\Sigma_1$ precedes $\Sigma_2$ (that is, $\Sigma_2$ lies in the future causal set $J^+(\Sigma_1)$ of $\Sigma_1$ and we  denote it as $\Sigma_1 \prec \Sigma_2$), then $\mathscr{M}_{21} = \{ J^+ (p)\cap J^-(q) \mid p \in \Sigma_1, q\in \Sigma_2 \}$.    We may consider the smooth submanifold $\mathscr{M}_{21}$ as a ``thick'' slice of the spacetime $\mathscr{M}$ with boundary $\partial \mathscr{M}_{21} = \overline{\Sigma}_1 \sqcup \Sigma_2$.    We will denote the restriction of the history $W$ to $\mathscr{M}_{21}$ as $W_{21}$.   

 Now we will consider the family of pairs $(\Sigma_*, \varphi_\Sigma)$, where $\Sigma_*$ denotes a pointed Cauchy hypersurface, that is, the hypersurface $\Sigma$ and a point $x_*$ on it, and $\varphi_\Sigma \colon \Sigma \to \Omega$ is a classical field, i.e., a section of the restriction of the bundle $\pi \colon \Omega \to \mathscr{M}$ to $\Sigma$.  Thus, given a history $W$, its restriction to the block $\mathscr{M}_{21}$, can be thought as a morphism in a new category with source the marked Cauchy hypersurface $\Sigma_1$ and the field $\varphi_1 = \phi\mid_{\Sigma_1}$, and target the pair $(\Sigma_2, \varphi_2 = \phi\mid_{\Sigma_2})$.  We will also denote the history $W_{21} \colon P(\mathscr{M}_{21}) \to \Gamma$, as a morphism: 
 $$
 W_{21}\colon (\Sigma_1, \varphi_1) \to (\Sigma_2,\varphi_2) \, .
 $$  
As indicated above, the restrictions $W_{21}$ (also called local histories or just histories) can be thought as actual morphisms of a category whose composition law is defined in close analogy to the situation in $(1+0)$-dimensions discussed above, Eq. (\ref{eq:horizontal1}).  

Given two histories $W_{21}\colon (\Sigma_1,\varphi_1) \to (\Sigma_2,\varphi_2)$, and $W'_{32}\colon (\Sigma_2,\varphi_2') \to (\Sigma_3,\varphi_3')$, defined on the blocks $\mathscr{M}_{21}$ and $\mathscr{M}_{32}$ respectively, determined by the Cauchy hypersurfaces $\Sigma_1 \prec \Sigma_2 \prec \Sigma_3$, and the restrictions $\varphi_1, \varphi_2, \varphi_3$ of the fields $\phi$, $\phi'$, to $\Sigma_1, \Sigma_2, \Sigma_3$ respectively, we can define a new history: 
$$
W'_{32} \circ W_{21}\colon  (\Sigma_1,\varphi_1) \to (\Sigma_3,\varphi_3) \, ,
$$ 
the composition of $W_{21}$ and $W_{32}'$, provided that the restriction of the classical fields $\varphi_2 = \phi\mid_{\Sigma_2}$ and $\varphi'_2 = \phi'\mid_{\Sigma_2}$ coincide, by means of the function $\widetilde{W}_{31} \colon \mathscr{M}_{21} \cup \mathscr{M}_{32} \to \Gamma$, defined as (compare with Eq. (\ref{eq:horizontal1})):

\begin{equation}\label{eq:horizontal}
\widetilde{W}_{31} (x) = \left\{ \begin{array}{ll} W_{21}(x,x_1)\,  & \mathrm{if\,} x \in \mathscr{M}_{21} \\  W'_{32} (x, x_2) \circ W_{21}(x_2, x_1)\,  & \mathrm{if \, } x \in \mathscr{M}_{32} \end{array}  \right. \, ,
\end{equation}
where $x_1,x_2$ denote the marked points in the hypersurfaces $\Sigma_1, \Sigma_2$ respectively.
Then, the composed history is defined as the functor 
$W'_{32} \circ W_{21}\colon P(\mathscr{M}) \to \Gamma$, given by: 
\begin{equation}\label{eq:comp_functor}
W'_{32} \circ W_{21} (y,x) = \widetilde{W}_{31}(y) \circ \widetilde{W}_{31}(x)^{-1} \, .
\end{equation}

Notice that because $W_{21} (x,x) = 1_{\phi(x)}$, then, the field $\tilde{\phi}$ corresponding to the composition $W_{32}'\circ W_{21}$ is given by $W_{32}' \circ W_{21} (x,x) = 1_{\tilde{\phi}(x)}$. Thus, if $x \in \mathscr{M}_{21}$, then due to (\ref{eq:horizontal}),
$$
W_{32}' \circ W_{21} (x,x) = W_{21}(x,x_0) \circ W_{21}(x,x_0)^{-1} = 1_{\phi (x)} \, ,
$$ 
and, if $x \in \mathscr{M}_{32}$, then 
\begin{eqnarray*}
&& W_{32}'\circ W_{21} (x,x) = \widetilde{W}_{31}(x) \circ \widetilde{W}_{31}(x)^{-1} \\ && = W_{32}' (x, x_2) \circ W_{21}(x_2,x_1) \circ W_{21}(x_2,x_1)^{-1} \circ W_{32}' (x,x_2)^{-1} = 1_{\phi'(x)} \, .
\end{eqnarray*}
Thus, the field $\tilde{\phi}$ corresponding to the composition is given by the field $\phi$ on $\mathscr{M}_{21}$ and the field $\phi'$ on $\mathscr{M}_{32}$.  Because of this we will denote such a field as $\phi \sqcup \phi'$ and we call it the union (or coproduct) of the fields $\phi$ and $\phi'$.

\subsection{The 2-category of functor fields over a spacetime}\label{sec:stacks}

The horizontal composition (\ref{eq:comp_functor}) defined on the space of local histories determined by Cauchy hypersurfaces determines a 2-category $\mathsf{C}(\mathscr{M})$, whose 0-cells are given by pairs $(\Sigma_*, \varphi_\Sigma)$ of pointed Cauchy hypersurfaces and local data, 1-cells are the local histories $W \colon (\Sigma_1,\varphi_1) \to (\Sigma_2, \varphi_2)$ as described above, and its 2-cells are natural transformations among histories $\eta \colon W \Longrightarrow W'$. The vertical composition of 2-cells is standard composition of natural transformations, that is if $\nu \colon W_1 \Longrightarrow W_2$, and $\eta \colon W_2 \Longrightarrow W_3$ are two natural transformations, then $\eta\circ_v \nu \colon W_1 \Longrightarrow W_3$, is the natural transformation: 
\begin{equation}\label{eq:vertical}
\eta\circ_v \nu (x) = \eta (\nu (x))\colon W_1(x) = \phi_1(x) \to W_3 (x) = \phi_3(x) \, .
\end{equation} 
The horizontal composition of the natural transformations $\nu \colon W_1 \Longrightarrow W_2$, and $\nu' \colon W_1' \Longrightarrow W_2'$, with $W_a \colon (\Sigma_1, \varphi_1) \to (\Sigma_2,\varphi_2)$, $W_a' \colon (\Sigma_2,\varphi_2) \to (\Sigma_3,\varphi_3)$, $a = 1,2$, is given by:
$$
\nu \circ_h \nu' \colon W_1' \circ W_1 \to W_2' \circ W_2 \, ,
$$ 
with 
$$
\nu \circ_h \nu' (x) \colon \phi_1 \sqcup \phi_1'(x) \to \phi_2 \sqcup \phi_2'(x) \, ,
$$ 
defined as: 
$$
\left\{ \begin{array}{ll} \nu \circ_h \nu' (x) = \nu (x) \colon \phi_1 (x) \to \phi_2 (x) \, , & \mathrm{if\,}  x \in \mathscr{M}_{21} \, ,\\
 \nu \circ_h \nu' (x) = \nu' (x) \colon \phi_1' (x) \to \phi_2' (x)\, , & \mathrm{if\,}  x \in \mathscr{M}_{32} \end{array}
 \right. \, . 
$$ 
Thus, $\nu \circ_h \nu'$ can also be written as: 
\begin{equation}\label{eq:2horizontal}
\nu \circ_h \nu' = \nu \sqcup \nu' \, .
\end{equation}

\begin{theorem}
$\mathsf{C}(\mathscr{M})$ is a 2-category.
\end{theorem}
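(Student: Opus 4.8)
The plan is to verify directly the axioms of a (strict) 2-category for the data already assembled: the 0-cells $(\Sigma_*, \varphi_\Sigma)$, the 1-cells (local histories) with horizontal composition (\ref{eq:comp_functor}), and the 2-cells (natural transformations among histories) with vertical composition (\ref{eq:vertical}) and horizontal composition (\ref{eq:2horizontal}). Concretely I would check three things: (a) for each ordered pair of 0-cells the 1-cells and 2-cells form a category under $\circ_v$; (b) the 0-cells and 1-cells form a category under $\circ_h$; and (c) $\circ_h$ is functorial on 2-cells, i.e.\ it preserves identities and satisfies the interchange law with $\circ_v$.

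Part (a) is essentially free: the 2-cells are ordinary natural transformations between functors $W \colon P(\mathscr{M}_{21}) \to \Gamma$, and $\circ_v$ in (\ref{eq:vertical}) is their pointwise composition, so associativity and the existence of identity 2-cells $1_W$ are inherited from the functor category $[P(\mathscr{M}_{21}), \Gamma]$. For part (b) I would show that $\circ_h$ on 1-cells is associative and unital: the identity 1-cell on $(\Sigma, \varphi)$ is the degenerate history supported on the zero-thickness block $\Sigma$ (the direct analogue of the unit $1_{(x_0, t_0)}$ of the $(1+0)$-dimensional case), and strict associativity follows from the gluing formula (\ref{eq:horizontal}) once one observes, exactly as in the computation of $\tilde{\phi}$ below (\ref{eq:comp_functor}), that the parallel-transport factors $W_{21}(x_2, x_1)$ telescope and cancel when three blocks $\Sigma_1 \prec \Sigma_2 \prec \Sigma_3 \prec \Sigma_4$ are concatenated.

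The heart of the argument is part (c), and in particular the claim that $\nu \circ_h \nu' = \nu \sqcup \nu'$ is genuinely a natural transformation $W_1' \circ W_1 \Longrightarrow W_2' \circ W_2$. Here I would verify the naturality square (\ref{eq:natural}) for $\nu \sqcup \nu'$ against a morphism $(y,x)$ of $P(\mathscr{M})$ by splitting into cases according to the blocks containing $x$ and $y$. When $x,y$ lie in the same block the composite 1-cell restricts, after cancellation of transport factors, to $W_a$ (resp.\ $W_a'$), and the square reduces to the naturality of $\nu$ (resp.\ $\nu'$) on that block. The only substantive case is the mixed one, $x \in \mathscr{M}_{21}$ and $y \in \mathscr{M}_{32}$: there a short computation gives $(W_a' \circ W_a)(y,x) = W_a'(y,x_2) \circ W_a(x_2,x)$, and feeding in the naturality of $\nu$ on $\mathscr{M}_{21}$ and of $\nu'$ on $\mathscr{M}_{32}$ reduces the required identity, after cancelling the invertible factors $W_2'(y,x_2)$ and $W_1(x_2,x)$, to $\nu(x_2) = \nu'(x_2)$. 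I expect this to be the main obstacle, because it exhibits the compatibility condition under which $\circ_h$ of 2-cells is defined at all: exactly as the horizontal composite of 1-cells presupposes $\varphi_2 = \varphi_2'$, the horizontal composite of 2-cells presupposes that $\nu$ and $\nu'$ agree on the shared hypersurface $\Sigma_2$. Building this matching condition into the definition is the key point that makes the gluing consistent.

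With well-definedness in hand, the remaining axioms are routine. Preservation of identities, $1_W \circ_h 1_{W'} = 1_{W' \circ W}$, is immediate from the piecewise formula for $\sqcup$. For the interchange law, take $\nu \colon W_1 \Rightarrow W_2$, $\eta \colon W_2 \Rightarrow W_3$ over $\mathscr{M}_{21}$ and $\nu' \colon W_1' \Rightarrow W_2'$, $\eta' \colon W_2' \Rightarrow W_3'$ over $\mathscr{M}_{32}$; the identity $(\eta \circ_v \nu) \circ_h (\eta' \circ_v \nu') = (\eta \circ_h \eta') \circ_v (\nu \circ_h \nu')$ is checked block by block, and because $\circ_h$ is disjoint union both sides restrict on $\mathscr{M}_{21}$ to $\eta \circ_v \nu$ and on $\mathscr{M}_{32}$ to $\eta' \circ_v \nu'$, hence coincide. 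The strict associativity and unit coherence for the horizontal composition of 2-cells follow likewise from the fact that $\sqcup$ is mere concatenation of block data. This completes the verification that $\mathsf{C}(\mathscr{M})$ is a 2-category.
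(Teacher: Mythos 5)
Your proof is correct and in fact more complete than the paper's own argument, which verifies only the exchange identity (\ref{eq:exchange}) by the same block-by-block case analysis you use in your final paragraph, taking the associativity and unitality of $\circ_h$ on 1-cells as already established by the discussion around Eqs.\ (\ref{eq:horizontal1})--(\ref{eq:comp_functor}) and taking for granted that $\nu \sqcup \nu'$ is a natural transformation. The genuinely valuable difference is your part (c): you are right that the mixed naturality square for $(y,x)$ with $x \in \mathscr{M}_{21}$, $y \in \mathscr{M}_{32}$ reduces, after using $(W_a'\circ W_a)(y,x) = W_a'(y,x_2)\circ W_a(x_2,x)$ and the naturality of $\nu$ and $\nu'$ on their respective blocks, to the condition $\nu(x_2) = \nu'(x_2)$ once the invertible factors are cancelled --- and more generally the piecewise definition of $\nu \sqcup \nu'$ is only unambiguous if $\nu$ and $\nu'$ agree on the overlap $\mathscr{M}_{21}\cap\mathscr{M}_{32} = \Sigma_2$. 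The paper imposes the matching condition $\varphi_2 = \varphi_2'$ only at the level of 1-cells and is silent about the corresponding condition on 2-cells, so your observation identifies a hypothesis that must be built into the definition of $\circ_h$ on 2-cells for the theorem to hold as stated. (A parallel remark applies to 1-cells: strictly one needs the restrictions of $W_{21}$ and $W_{32}'$ to $P(\Sigma_2)$ to agree, not merely the object parts $\varphi_2 = \varphi_2'$, for $\widetilde{W}_{31}$ to be single-valued on $\Sigma_2$.) With that compatibility made explicit, the rest of your verification --- vertical composition inherited from the functor categories, telescoping of transport factors for associativity of $\circ_h$, and the blockwise check of interchange --- matches the paper's intent and closes the gaps it leaves implicit.
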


\begin{proof}
It remains to be checked the exchange identity between the horizontal and vertical compositions:
\begin{equation}\label{eq:exchange}
(\eta\circ_h\eta') \circ_v (\nu\circ_h\nu') = (\eta\circ_v\nu) \circ_h (\eta'\circ_v\nu') \, .
\end{equation}

When, we apply the l.h.s. of (\ref{eq:exchange}) to an element $x\in \mathscr{M}_{21}$,  using Eqs. (\ref{eq:vertical}), (\ref{eq:2horizontal}), we get $\eta(\nu(x))$ which coincides with what we get when we apply the r.h.s. of (\ref{eq:exchange}) to $x\in \mathscr{M}_{21}$.
Similarly, If we apply the l.h.s. of (\ref{eq:exchange}) to $x \in \mathscr{M}_{32}$, we get $\eta'(\nu'(x))$, that, again, it agrees with the result of applying it to the r.h.s. of (\ref{eq:exchange}) due to Eqs. (\ref{eq:vertical}), (\ref{eq:2horizontal}).  See below, Fig. \ref{fig:exchange},  for an abstract diagrammatic proof of the exchange identity.

\begin{figure}[ht]
\centering
\begin{tikzpicture} 
\fill (0.2,0) circle  (0.1);
\fill (3.5,0) circle  (0.1);
\draw [thick,->] (0.3,0.2) arc (160:20:1.7);
\draw [thick,->] (0.3,-0.2)  arc (20:160:-1.7);
\draw [thick,->] (0.4,0) -- (3.2,0);
\node  at (1.9,1.6)   {$W_1$};
\node  at (1.9,-0.2)   {$W_2$};
\node  at (1.9,-1.6)   {$W_3$};
\node at (-0.2,0.5)   {$(\Sigma_1,\varphi_1)$};
\node at (3.5,1.2)   {$(\Sigma_2,\varphi_2)$};
\node at (7.3,0.5)   {$(\Sigma_3,\varphi_3)$};
\node  at (1.9,1.6)   {$W_1$};
\node  at (1.9,0.7) {$\Big\Downarrow$};
\node  at (2.3,-0.7)   {$\eta$};
\node  at (1.9,-0.7) {$\Big\Downarrow$};
\node  at (2.3,0.7) {$\nu$};
\node  at (3,-2) {$\cong$};

\fill (3.5,0) circle  (0.1);
\fill (6.8,0) circle  (0.1);
\draw [thick,->] (3.6,0.2) arc (160:20:1.7);
\draw [thick,->] (3.6,-0.2)  arc (20:160:-1.7);
\draw [thick,->] (3.7,0) -- (6.5,0);
\node  at (5.5,1.6)   {$W_1'$};
\node  at (5.5,-0.2)   {$W_2$};
\node at (5.2,-1.6)   {$W_3'$};
\node  at (5.2,0.7) {$\Big\Downarrow$};
\node  at (5.6,-0.7)   {$\eta'$};
\node  at (5.2,-0.7) {$\Big\Downarrow$};
\node  at (5.6,0.7) {$\nu'$};
\node  at (7.15,0) {$\cong$};
\end{tikzpicture}

\begin{tikzpicture} 
\fill (7.5,0) circle  (0.1);
\fill (10.8,0) circle  (0.1);
\draw [thick,->] (7.6,0.2) arc (140:40:2);
\draw [thick,->] (7.6,-0.2)  arc (40:140:-2);
\node  at (9.2,1.3)   {$W$};
\node at (9.2,-1.3)   {$W_3$};
\node  at (9.2,0) {$\bigg\Downarrow$};
\node  at (9.9,0) {$\eta\circ_v \nu$};
\node  at (9.8,-2) {$\cong$};

\fill (10.8,0) circle  (0.1);
\fill (14.1,0) circle  (0.1);
\draw [thick,->] (10.9,0.2) arc (140:40:2);
\draw [thick,->] (10.9,-0.2)  arc (40:140:-2);
\node  at (12.5,1.3)   {$W_1'$};
\node at (12.5,-1.3)   {$W_3'$};
\node  at (12.5,0) {$\bigg\Downarrow$};
\node  at (13.3,0) {$\eta'\circ_v \nu'$};
\end{tikzpicture}


\begin{tikzpicture} 
\fill (-0.3,0) circle  (0.1);
\fill (3,0) circle  (0.1);
\draw [thick,->] (-0.2,0.2) arc (160:20:1.7);
\draw [thick,->] (-0.2,-0.2)  arc (20:160:-1.7);
\draw [thick,->] (-0.1,0) -- (2.7,0);
\node  at (1.4,1.6)   {$W_1'\circ W_1$};
\node at (1.4,-1.6)   {$W_3'\circ W_3$};
\node  at (1.2,0.6) {$\Big\Downarrow$};
\node  at (2,-0.5)   {$\eta\circ_h \eta'$};
\node  at (1.2,-0.6) {$\Big\Downarrow$};
\node  at (2.05,0.5) {$\nu\circ_h \nu'$};

\node  at (3.9,0) {$\cong$};

\fill (4.7,0) circle  (0.1);
\fill (8,0) circle  (0.1);
\draw [thick,->] (4.8,0.2) arc (140:40:2);
\draw [thick,->] (4.8,-0.2)  arc (40:140:-2);
\node  at (6.4,1.3)   {$W_1'\circ W$};
\node at (6.4,-1.3)   {$W_3'\circ W_3$};
\node  at (6.4,0) {$\bigg\Downarrow$};
\end{tikzpicture}

\caption{A diagrammatic proof of the exchange identity: $(\eta\circ_h\eta') \circ_v (\nu\circ_h\nu') = (\eta\circ_v\nu) \circ_h (\eta'\circ_v\nu')$. Starting with the first diagram move either right down (first vertical composition), then jump to the bottom right (second horizontal composition), or jump to the bottom left (first horizontal composition) and then move right (second vertical composition).}
\label{fig:exchange}
\end{figure}
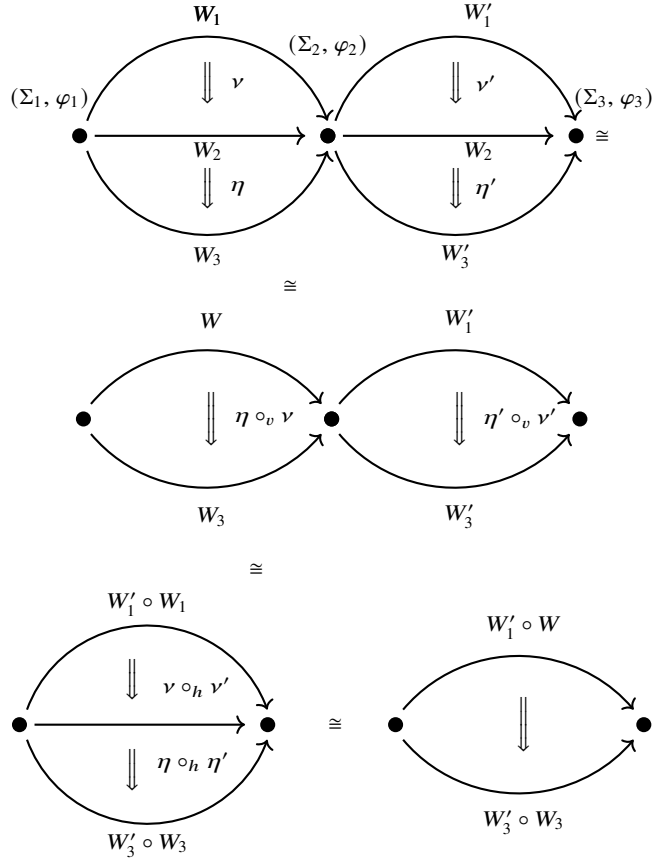

\end{proof}


\section{Conclusions and discussion}

A new categorical background for the notion of physical fields based on the groupoidal picture of quantum mechanics has been proposed.  The fundamental notion upon which the theory is constructed is that of probing systems, that is, the groupoidal description of ``test particles''.  Thus, in this picture, the physical fields are functors defined on groupoids describing quantum probing systems.   The representation theory of such functor fields has been derived and it has been shown that there is a one-to-one correspondence on each groupoid orbit between functor fields and group homomorphisms defined on the isotropy group of the probing groupoid. 

Local properties of functor fields have been explored finding that on locally generated probing groupoids, functor fields can be defined in terms of sheaves on the natural category associated to the probing groupoid.   

The notion of gauge transformations between physical fields is established as natural transformations among the corresponding functors.  The relevant instance of standard Yang-Mills fields is discussed from this point of view and the class of functor fields defined on a given probing system becomes a category itself.

The analysis of the theory of functor fields is concluded by succinctly discussing the horizontal composition of functor fields thought as ``histories'' when the probing system is defined over a globally hyperbolic spacetime.  Then the category of functor fields becomes a 2-category. 

This work constitutes the first approach to the new notion of functor fields.  Because the notion of functor fields is based upon the groupoidal description of quantum systems, this notion is inherently quantum.  However the examples discussed in the article are standard classical fields like sections of bundles and Yang-Mills fields.  We may argue that the standard fields corresponding to sections of a bundle are ``classical'' fields because the probing groupoid in such case can be thought as a classical system, however gauge fields are ``quantum'', meaning that the probing system is a system whose groupoid is non-classical (i.e., its algebra is non-commutatinve).  In any case, we will postpone the analysis of the quantum aspects of the theory presented here to subsequent work.  In particular we will discuss the relation of the notion of quantum fields in the Streater-Wightman axiomatic setting (see Sect. \ref{sec:introduction}) with the notion of fields developed here in subsequent papers.   The relation with Feynman's approach is simpler and is discussed in the main text: Feynman's paths are groupoid homomorphisms on clocks.  Further work will be devoted to analyzing the emergence of Haag-Kastler nets of algebras, i.e., the AQFT approach, from the perspective presented in this paper.

The role of symmetry and their functorial treatment is merely sketched in this paper and it will be the subject of further analysis in future papers \cite{Ib25b}. 

Finally, we must point out that no mention has been made to the dynamical description of the theory, that is, no dynamical principle for functor fields has been introduced.   A natural way to do it would be to extend previous work on the groupoidal setting of quantum mechanics, where the dynamics of the theory is introduced by means of a fusion of Feynman and Schwinger's principles (see \cite{Ci21b,Ib24,Ci24}).

\begin{acknowledgement}
A.I. and A.M. acknowledge financial support from the Spanish Ministry of Economy and Competitiveness, through the Severo Ochoa Program for Centers of Excellence in RD (SEV-2015/0554), the MINECO research project PID2020-117477GB-I00, and the Comunidad de Madrid project TEC-2024/COM-84 QUITEMAD-CM.
L.S. acknowledges financial support from Next Generation EU through the project 2022XZSAFN, PRIN2022 CUP: E53D23005970006.
L.S. is a member of the GNSGA (Indam).
\end{acknowledgement}
%

%
%
%

\end{document}